\DeclareMathOperator{\diag}{diag}
\newtheorem{mythm}{Theorem}
\newtheorem{mylemma}{Lemma}
\newtheorem{myremark}{Remark}
\numberwithin{equation}{section}
\title{Quantum circuits for partial differential equations via Schr\"odingerisation}
\author[1]{Junpeng Hu}
\thanks{hjp3268@sjtu.edu.cn}
\author[1,2,4]{Shi Jin}
\thanks{shijin-m@sjtu.edu.cn}
\author[2,3,4]{Nana Liu}
\thanks{nana.liu@quantumlah.org}
\author[1,2,4]{Lei Zhang}
\thanks{lzhang2012@sjtu.edu.cn}
\affil[1]{\small School of Mathematical Sciences, Shanghai Jiao Tong University, Shanghai, 200240, China}
\affil[2]{Institute of Natural Sciences, Shanghai Jiao Tong University, Shanghai, 200240, China}
\affil[3]{University of Michigan-Shanghai Jiao Tong University Joint Institute, Shanghai, 200240, China}
\affil[4]{MOE-LSC, Shanghai Jiao Tong University, Shanghai, 200240, China}
\begin{document}

\maketitle

\begin{abstract}
Quantum computing has emerged as a promising avenue for achieving significant speedup, particularly in large-scale PDE simulations, compared to classical computing. One of the main quantum approaches involves utilizing Hamiltonian simulation, which is directly applicable only to Schr\"odinger-type equations. To address this limitation, Schr\"odingerisation techniques have been developed, employing the warped transformation to convert general linear PDEs into Schr\"odinger-type equations. However, despite the development of Schr\"odingerisation techniques, the explicit implementation of the corresponding quantum circuit for solving general PDEs remains to be designed. In this paper, we present a detailed implementation of a quantum algorithm for general PDEs using Schr\"odingerisation techniques. We provide examples of the heat equation, and the advection equation approximated by the upwind scheme, to demonstrate the effectiveness of our approach. Complexity analysis is also carried out to demonstrate the quantum advantages of these algorithms in high dimensions over their classical counterparts. Several numerical experiments demonstrate the validity of these proposed quantum circuits.
\end{abstract}

\section{Introduction}

Partial differential equations (PDEs) models are essential tools to investigate dynamic behaviors of many physical systems, arise in applications such as reservoir modeling, atmospheric and ocean circulation, and high-frequency scattering. Simulating complex dynamical systems within feasible computational time is of utmost importance. Despite the significant progress achieved through the utilization of high-performance computing clusters, the computational overhead remains a major bottleneck, particularly for large systems, high dimensional and multiscale problems.

In recent years, quantum computing \cite{feynman2018simulating,nielsen2010quantum} has emerged as a promising avenue for achieving significantly faster computation compared to classical computing. Despite the current limitations in hardware scalability and noise resistance, there has been remarkable progress in developing quantum algorithms for scientific and engineering computing, which will be useful when quantum computers become available one day, or can also help to guide the design of analog quantum computing \cite{CVPDE2023}. Among many potential applications, one particularly promising area is the utilization of quantum computers as solvers for both quantum and classical partial differential equations (PDEs).

Lloyd's groundbreaking work \cite{lloyd1996universal} introduced the first explicit quantum simulation algorithm, which allowed for the simulation of Hamiltonians containing local interaction terms. Building upon this, Aharonov and Ta-Shma \cite{aharonov2003adiabatic} proposed an efficient simulation algorithm capable of handling a broader class of sparse Hamiltonians. Subsequent research endeavors \cite{childs2004quantum,berry2007efficient,wiebe2011simulating,childs2010relationship,berry2009black,childs2012hamiltonian,berry2015simulating,berry2014exponential,berry2015hamiltonian,berry2020time,zhao2022hamiltonian} have further refined and improved these simulation techniques. In their work \cite{berry2015hamiltonian}, Berry et al. combined the quantum walk approach \cite{childs2010relationship,berry2009black} with the fractional-query approach \cite{berry2014exponential}. This innovative fusion resulted in a significant reduction in query complexity, achieving an algorithmic complexity of $O\left( \tau \frac{\log(\tau/\varepsilon)}{\log\log(\tau/\varepsilon)} \right)$, where $\tau := s \left| H \right|_{\max}t$ and $\varepsilon$ is the desired precision. Notably, this approach exhibited a near-linear dependence on both the sparsity $s$ and the evolution time $t$, while offering exponential speedup over $\varepsilon$. Moreover, Berry et al. established a lower bound, demonstrating the near-optimality of this result with respect to scaling in either $\tau$ or $\varepsilon$ independently.

Achieving quantum speedup for ordinary or partial differential equations (ODEs/PDEs) that are not of the Schr\"odinger equation type is of significant importance in scientific and engineering applications. A commonly employed strategy is to discretize the spatial and temporal domains, thereby transforming linear partial differential equations (PDEs) into a system of linear algebraic equations. Subsequently, quantum algorithms are utilized to solve this transformed system \cite{harrow2009quantum,cao2012quantum,childs2017quantum}. The application of quantum algorithms in solving PDEs has the potential to yield polynomial or super-polynomial speedups \cite{berry2014high,berry2017quantum,childs2021high}. Jin et al. conducted a comprehensive study on the time complexity of quantum difference methods for linear and nonlinear high-dimensional and multiscale PDEs within the framework of Asymptotic-Preserving schemes \cite{jin2022time,jin2022quantumobservables,jin2022timenonlinear}. In their work \cite{jin2022quantum,jin2022quantumdetail}, they proposed a recent approach known as ``Schr\"odingerisation'' for solving general linear PDEs. This approach involves converting linear PDEs into a system of Schr\"odinger equations in one higher dimension, using a simple transformation called the warped phase transformation. ``Schr\"odingerisation'' enables quantum simulations for general linear ordinary differential equations (ODEs) and PDEs \cite{lu2023quantum}, and iterative  linear algebra solvers \cite{JL-LA}.  An alternative approach is presented in \cite{an2023linear}.

While significant progress has been made in the field, the explicit design of quantum circuits for solving partial differential equations (PDEs) remains open. A recent study by Sato et al. \cite{sato2024hamiltonian} has introduced a novel approach that utilizes the Bell basis to construct scalable quantum circuits specifically for wave or Schr\"odinger-type PDEs. In our paper, we construct quantum circuits for general PDEs--which not necessarily follow unitary dynamics --by using the Schr\"odingerisation technique. 

\paragraph{Contribution} We present a detailed circuit implementation of the approximating unitary matrix $V$, as depicted in Figure \ref{fig:circuit:schro}. This explicit  implementation has not been reported in previous literature. To demonstrate the effectiveness of our approach, we provide two examples: the heat equation, and the advection equation with the upwind scheme which was not studied in \cite{sato2024hamiltonian}. Complexity analysis is performed to demonstrate the quantum advantages of these algorithms in high dimensions over their classical counterparts. Numerical experiments for both the heat equation and the advection equation were performed using the {\it Qiskit} package \cite{Qiskit}. The results, obtained either through statevectors or observations from the quantum circuits, closely align with those derived from classical computational methods.

\paragraph{Organization} The paper is structured as follows: In Section \ref{sec:fdo}, we introduce the basic notations. The quantum circuit implementation for the heat equation is introduced in Section \ref{sec:heat}, followed by the implementation of the upwind scheme for the advection equation in Section \ref{sec:advection}. To showcase the scalability of the proposed quantum circuits, we conduct a complexity analysis in Section \ref{sec:complexity}. Several numerical experiments are presented to demonstrate the validity of these proposed quantum circuits in Section \ref{sec:numerical}. Finally, we present our conclusions in Section \ref{sec:conclusion}.

\paragraph{Notation} In the context of complexity analysis, we use the big $O$ notation to describe the upper bound, and use $\tilde{O}$ to denote $O$ where logarithmic terms are ignored. The symbols $X,Y,Z$ represent the Pauli matrices.

\section{Representation of finite difference operator}
\label{sec:fdo}

We begin by considering the one-dimensional case in order to establish the definition of the finite difference operator and its representation in quantum states.

Given a one-dimensional domain $\Omega:=[0,L]$, which is uniformly subdivided into $N_x = 2^{n_x}$ intervals of length $h=\frac{L}{N_x}$, one can express a discrete function $u$ with $u_i$ defined on $x_i = ih$, for $i=0, \cdots, N_x-1$, as a quantum state $\ket{\mathbf{u}} = \mathbf{u} / \left\| \mathbf{u} \right\|, \mathbf{u} := \sum_{j=0}^{2^{n_x}-1} u_j |j\rangle$.

The shift operators can be defined as follows:
\begin{equation}
    \begin{aligned}
        S^{-} &:= \sum_{j=1}^{2^{n_x}-1} | j-1 \rangle \langle j | \\
        &= \sum_{j=1}^{n_x} I^{\otimes (n_x -j)} \otimes \sigma_{01} \otimes \sigma_{10}^{\otimes (j-1)} \\
        &\triangleq \sum_{j=1}^{n_x} s_j^{-}, \\
        S^{+} &:= (S^{-})^{\dagger} = \sum_{j=1}^{2^{n_x}-1} | j \rangle \langle j-1 | \\
        &= \sum_{j=1}^{n_x} I^{\otimes (n_x -j)} \otimes \sigma_{10} \otimes \sigma_{01}^{\otimes (j-1)} \\
        &\triangleq \sum_{j=1}^{n_x} s_j^{+}.
    \end{aligned}
\end{equation}
with basic $2\times 2$ matrices
\begin{equation}
\begin{aligned}
    \sigma_{01} &:= |0\rangle \langle 1|, \quad
    \sigma_{10} := |1\rangle \langle 0|, \\
    \sigma_{00} &:= |0\rangle \langle 0|, \quad
    \sigma_{11} := |1\rangle \langle 1|.
\end{aligned}
\end{equation}

The forward and backward finite difference operators are, 
\begin{equation}
        (D^{+} \mathbf{u})_j = \frac{u_{j+1} - u_{j}}{h}, \quad (D^{-} \mathbf{u})_j = \frac{u_{j} - u_{j-1}}{h},
\end{equation}
for $j = 0,1,\dots, N_x-1$.

Thus, the forward ($D_{P}^+$), backward ($D_{P}^-$), central ($D_{P}^{\pm}$) finite difference operators with respect to the first order derivative, and the central finite difference operator ($D_{P}^{\Delta}$) with respect to the second order derivative, combined with the periodic boundary conditions, can be expressed as follows,
\begin{equation}
    \begin{aligned}
        D_{P}^{+} &= \frac{S^{-} - I^{\otimes n_x} + \sigma_{10}^{\otimes n_x}}{h}, \\
        D_{P}^{-} &= \frac{I^{\otimes n_x} - S^{+} - \sigma_{01}^{\otimes n_x}}{h}, \\
        D_{P}^{\pm} &= \frac{S^{-} - S^{+} - \sigma_{01}^{\otimes n_x} + \sigma_{10}^{\otimes n_x}}{2h}, \\
        D_{P}^{\Delta} &= \frac{S^{-} + S^{+} - 2 I^{\otimes n_x} + \sigma_{01}^{\otimes n_x} + \sigma_{10}^{\otimes n_x}}{h^2}.
    \end{aligned}
\end{equation}

Similarly, the difference operator with Dirichlet boundary conditions for the left or right side can be expressed as
\begin{equation}
\begin{aligned}
    D_{D}^{+} &= \frac{S^{-} - I^{\otimes n_x}}{h}, \quad \text{(right side)}, \\
    D_{D}^{-} &= \frac{I^{\otimes n_x} - S^{+}}{h}, \quad \text{(left side)}.
\end{aligned}
\end{equation}
Considering the Dirichlet boundary conditions for both sides, i.e., $u_0 = u_{N_x} = 0$, it suffices to solve the numerical solution $\mathbf{u}:=[u_1;\cdots,u_{N_x-1}]$. In this case, we choose $N_x = 2^{n_x}+1$ and define $|\mathbf{u}\rangle = \mathbf{u} / \left\| \mathbf{u} \right\|, \mathbf{u} := \sum_{j=0}^{2^{n_x}-1} u_{j+1} |j\rangle$. Therefore, the difference operators with Dirichlet boundary conditions for both sides are 
\begin{equation}
\begin{aligned}
    D_{D}^{+} &= \frac{S^{-} - I^{\otimes n_x}}{h}, \\
    D_{D}^{-} &= \frac{I^{\otimes n_x} - S^{+}}{h}, \\
    D_{D}^{\pm} &= \frac{S^{-} - S^{+}}{2h}, \\
    D_{D}^{\Delta} &= \frac{S^{-} + S^{+} - 2 I^{\otimes n_x}}{h^2}.
\end{aligned}
\end{equation}

We will employ the following techniques described in \cite{sato2024hamiltonian}, which will be utilized multiple times throughout the paper. 
\begin{mylemma}{{\cite{sato2024hamiltonian}}}
\label{lemma:changebasis}
    Given an operator of the form $S = \ket{a}\bra{b} + \ket{b}\bra{a}$ with $\ket{a}$, $\ket{b} \in \mathbb{C}^{2^n}$ and $\langle a\ket{b}=0$, it can be decomposed into
    \begin{equation}
    \begin{aligned}
        S &= \frac{\ket{a}+\ket{b}}{\sqrt{2}} \frac{\bra{a}+\bra{b}}{\sqrt{2}} - \frac{\ket{a}-\ket{b}}{\sqrt{2}} \frac{\bra{a}-\bra{b}}{\sqrt{2}} \\
        &\triangleq \ket{a^\prime} \bra{a^\prime} - \ket{b^\prime} \bra{b^\prime},
    \end{aligned}
    \end{equation}
    where $\langle a^\prime\ket{b^\prime}=0$. Then, there exists a unitary matrix $B$ such that the following relation holds true,
    \begin{equation}
        B \ket{0}\ket{1}^{\otimes (n-1)} = \ket{a^\prime}, \quad B \ket{1}^{\otimes n} = \ket{b^\prime},
    \end{equation}
    and $S$ can be written as
    \begin{equation}\label{eqn:changebasis:S}
        S = B \left( Z \otimes \ket{1}\bra{1}^{\otimes (n-1)} \right) B^\dagger.
    \end{equation}
    The time evolution operator $\exp(iSt)$ can be implemented as
    \begin{equation}
        \exp(iSt) = B \cdot  CRZ^{1,\dots,n-1}_{n}(-2t) \cdot B^\dagger,
    \end{equation}
    where $CRZ^{1,\dots,n-1}_{n}(\theta)$ is the multi-controled RZ gate ($e^{-i\theta Z/2}$) acting on the $n$-th qubit controlled by $1,\dots,n-1$-th qubits.
\end{mylemma}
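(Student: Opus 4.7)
The plan is to prove this in three small steps, each mostly mechanical, since the main content is just repackaging the spectral decomposition of $S$ in a basis adapted to quantum circuits.

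First, I would verify the algebraic identity $S = \ket{a'}\bra{a'} - \ket{b'}\bra{b'}$ by direct expansion of the right-hand side, using only the definitions of $\ket{a'},\ket{b'}$. All four cross terms appear, and the $\ket{a}\bra{a}$ and $\ket{b}\bra{b}$ pieces cancel between the two projectors, leaving exactly $\ket{a}\bra{b}+\ket{b}\bra{a}$. In the same computation I would check that $\langle a'|b'\rangle = \tfrac{1}{2}(\langle a|a\rangle - \langle b|b\rangle) = 0$, using the unit-norm assumption on $\ket{a},\ket{b}$ (implicit since they are quantum states) together with their orthogonality $\langle a|b\rangle=0$. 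This step also gives $\|\ket{a'}\|=\|\ket{b'}\|=1$, so $\{\ket{a'},\ket{b'}\}$ is an orthonormal pair.

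Second, I would construct the unitary $B$. Since $\{\ket{0}\ket{1}^{\otimes(n-1)},\ket{1}^{\otimes n}\}$ is orthonormal and $\{\ket{a'},\ket{b'}\}$ is orthonormal by the previous step, extend each pair to an orthonormal basis of $\mathbb{C}^{2^n}$ by Gram-Schmidt. Define $B$ as the linear map sending the first basis to the second elementwise; by construction $B$ is unitary and satisfies $B\ket{0}\ket{1}^{\otimes(n-1)}=\ket{a'}$ and $B\ket{1}^{\otimes n}=\ket{b'}$. (Only these two images matter for the identity we need; the extension is non-unique but irrelevant.)

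Third, I would verify \eqref{eqn:changebasis:S} by expanding the central operator:
\begin{equation*}
Z \otimes \ket{1}\bra{1}^{\otimes(n-1)} = \ket{0}\ket{1}^{\otimes(n-1)}\bra{0}\bra{1}^{\otimes(n-1)} - \ket{1}^{\otimes n}\bra{1}^{\otimes n},
\end{equation*}
using $Z=\ket{0}\bra{0}-\ket{1}\bra{1}$. Conjugating by $B$ and applying the two defining actions of $B$ turns this into $\ket{a'}\bra{a'}-\ket{b'}\bra{b'}$, which equals $S$ by the first step.

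I do not expect any serious obstacle. The only subtle point is confirming that $\ket{a'},\ket{b'}$ are genuinely orthonormal (so that a unitary $B$ realising the prescribed mapping exists); once that is settled, the rest is bookkeeping with projectors. The lemma is essentially the statement that any Hermitian rank-$2$ operator with eigenvalues $\pm 1$ is unitarily similar to $Z\otimes\ket{1}\bra{1}^{\otimes(n-1)}$, and the proof just makes this similarity explicit in the form needed later for circuit synthesis.
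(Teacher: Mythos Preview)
Your proposal is correct and is the natural argument; the paper itself states the lemma without proof (citing \cite{sato2024hamiltonian} for the technique), so there is no alternative approach to compare against. The only implicit hypothesis you rightly flag is that $\ket{a},\ket{b}$ are unit vectors, which holds in every application in the paper since they are always computational basis states.
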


\begin{myremark}\label{remark:changebasis:lambda}
    By rotating $\ket{b}$ with the phase $e^{-i\lambda}$, similar results hold for $S = e^{i\lambda} \ket{a}\bra{b} + e^{-i\lambda} \ket{b}\bra{a}$. In particular, when $\ket{a}$ and $\ket{b}$ are basis states, we can explicitly construct the unitary matrix $B$ using H gates (Hadamard gates), Phase gates, Pauli gates, and CNOT gates, such as in Lemma \ref{lemma:evolution:bell}, Equation \eqref{eqn:adv:gate:U11}, Equation \eqref{eqn:adv:gate:U21}.
\end{myremark}

\begin{myremark}\label{remark:changebasis:norm}
    As reformulated in Equation \eqref{eqn:changebasis:S}, it follows that $\left\| S  \right\| = 1$ immediately. 
\end{myremark}

We introduce a useful  Lemma to facilitate the implementation of the differential operator's time evolution. The proof can be found in Appendix \ref{sec:appendix:circuit:lemma:bell}.
\begin{mylemma}{{\cite{sato2024hamiltonian}}}
\label{lemma:evolution:bell}
    The time evolution operator formulated as 
    \begin{equation}
        \exp \left( i\gamma\tau (e^{i\lambda} s_{j}^{-} + e^{-i\lambda} s_{j}^{+}) \right)
    \end{equation}
    can be implemented explicitly by
    \begin{equation}
        I^{\otimes(n_x-j)} \otimes W_{j}(\gamma\tau, \lambda),
    \end{equation}
    with 
    \begin{equation}
    \begin{aligned}
        W_{j}(\gamma\tau, \lambda) &:= B_{j}(\lambda) \text{CRZ}_{j}^{1,\dots,j-1}(-2\gamma\tau) B_{j}(\lambda)^{\dagger}, \\
        B_j(\lambda) &:= \left( \prod_{m=1}^{j-1} \text{CNOT}_{m}^{j} \right) P_{j}(-\lambda) H_{j},
    \end{aligned}
    \end{equation}
    where $H_{j}$ is the Hadamard gate acting on the $j$-th qubit, $P_{j}(\lambda)$ is the Phase gate acting on the $j$-th qubit as
    \begin{equation}
        P_{j}(\lambda) := \begin{bmatrix}
            1 & 0 \\ 0 & e^{i\lambda}
        \end{bmatrix},
    \end{equation}
    $\text{CNOT}_{m}^{j}$ is the CNOT gate acting on the $m$-th qubit controlled by the $j$-th qubit.
\end{mylemma}

\section{The heat equation}
\label{sec:heat}

We begin by considering a scalar field $u$ governed by the $d$-dimensional heat equation.
\begin{equation}\label{mod:heat}
    \left\{ \begin{aligned}
        \partial_t u(t,x) &= a \Delta u(t,x), \\
        u(0,x) &= f(x),
    \end{aligned} \right. \quad x\in \Omega \subset \mathbb{R}^d,
\end{equation}
with the Dirichlet boundary condition, where $a>0$ is a constant, the $d$-dimensional heat equation can be solved using the central difference method in classical simulation as follows:
\begin{equation}\label{mod:heat:num}
    d\mathbf{u}(t) / dt = a D_{D}^{\Delta} \mathbf{u}(t), \quad \mathbf{u}(0) = \mathbf{u}_0,
\end{equation}
where $\mathbf{u} := \left[u_{j_1,\dots, j_d}\right]_{1\leq j_i \leq N_x-1}$.

\subsection{Schr\"odingerisation}

The heat equation, although of dissipation type and not directly simulatable using Hamiltonian simulation, can be transformed into an equivalent Schr\"odinger-type equation using the Schr\"odingerization technique. This transformation enables the application of quantum algorithms designed for Schr\"odinger-type equations to solve the heat equation in one-higher dimension.

\subsubsection{Continuous formulation}
To be more precise, we can apply the warped phase transformation $v(t,x,p) = e^{-p} u(t,x)$, where $p \geq 0$, to the heat equation,
\begin{equation}\label{heat-sch}
    \partial_t v(t,x,p) = a \Delta v(t,x,p) = - a \partial_p \Delta v(t,x,p).
\end{equation}
By extending to $p<0$ with initial data $v(0,x,p) = e^{-|p|}u(0,x)$, and applying $\mathcal{F}{p}$, the Fourier transform over $p$, to Equation \eqref{heat-sch}, one obtains
\begin{equation}\label{mod:heat:eta}
    \partial_t \hat{v}(t,x,\eta) = i\eta a \Delta \hat{v}(t,x,\eta),
\end{equation}
where $\hat{v}(t,x,\eta) := \mathcal{F}_{p} v(t,x,p)$.
Since the operator $\eta a \Delta$ is self-adjoint under the Dirichlet BC, Equation \eqref{mod:heat:eta} is exactly the Schr\"odinger equation.

\subsubsection{Discretization}
The next step involves discretizing the problem. For instance, in the one-dimensional case, the variables $v$ and $\hat{v}$ are discretized as $\mathbf{v}(t) := [v_{j,k}(t)]_{j,k}$ and $\mathbf{\hat{v}}(t) := [\hat{v}_{j,k}(t)]_{j,k}$ with $v_{j,k}(t) := v(t, x_j, p_k)$, $\hat{v}_{j,k}(t) := v(t, x_j, \eta_k)$ respectively. The initial conditions can be represented as follows:
\begin{equation}\label{eqn:heat:initial}
\begin{aligned}
    \mathbf{v}(0) &:= \mathbf{u}(0) \otimes \mathbf{p} \\
    \mathbf{\hat{v}}(0) &:= \mathcal{F}_{p} \mathbf{v}(0) = \mathbf{u}(0) \otimes \mathbf{\eta} \\
    \mathbf{p} &:= \left[e^{-|p_0|}; e^{-|p_1|}; \dots; e^{- \left|p_{N_p-1}\right|} \right], \\
    \mathbf{\eta} &:= \left[\frac{2}{\eta_0^2 + 1}; \frac{2}{\eta_{1}^2+1}; \dots; \frac{2}{\eta_{N_p-1}^2 + 1}\right],
\end{aligned}
\end{equation}
where $-\pi R = p_0 < \cdots < p_{N_p} = \pi R$ with mesh size $\Delta p = 2\pi R/N_p$, $N_p=2^{n_p}$ and the Fourier variable $\eta_{k} = \left(k-\frac{N_p}{2}\right) / R, k=0, 1, \dots, N_p-1$. The Hamiltonian $\eta a \partial_{xx}$ is discretized using the central difference method as follows,
\begin{equation}
\begin{aligned}
    &\mathbf{H}_{\text{heat}} = a D_{D}^{\Delta} \otimes D_{\eta} \\
    =& \frac{a}{h^2} \left(S^{-} + S^{+} - 2 I^{\otimes n_x} \right) \otimes \diag\left(\eta_0,\dots,\eta_{N_p-1}\right) \\
    =& \sum_{k=0}^{N_p-1} \left(k-\frac{N_p}{2}\right) \mathbf{H}_{0} \otimes |k\rangle \langle k|,
\end{aligned}
\end{equation}
with
\begin{equation}\label{def:heat:H0gamma}
    \mathbf{H}_{0} := \gamma_{0} \left[ \sum_{j=1}^{n_x} ( s_{j}^{-} +  s_{j}^{+}) - 2 I^{\otimes n_x} \right], \  \gamma_{0} = \frac{a}{h^2 R}.
\end{equation}
In the $d$-dimensional case, one has
\begin{equation}\label{eqn:heat:Hd}
\begin{aligned}
    &\mathbf{H}_{\text{heat}} = a \sum_{\alpha=1}^{d} \left( D_{D}^{\Delta} \right)_{\alpha} \otimes D_{\eta} \\
    =& \sum_{k=0}^{N_p-1} \left(k-\frac{N_p}{2}\right) \sum_{\alpha=1}^{d} \left(\mathbf{H}_{0}\right)_{\alpha} \otimes |k\rangle \langle k| ,
\end{aligned}
\end{equation}
where $(\bullet)_{\alpha} := I^{\otimes (d-\alpha)n_x} \otimes \bullet \otimes I^{\otimes (\alpha-1) n_x}$.

\subsection{Quantum circuit}
\label{sec:heat:circuit}
We move forward to develop a detailed quantum circuit designed to solve the heat equation. To simplify the process and enhance the comprehension of the construction steps, we will adopt the notations listed below,
\begin{equation}
\begin{aligned}
    U_0(\tau) &:= \exp(i\mathbf{H}_{0}\tau), \\
    \tilde{U}_{0}(\tau) &:= \prod_{\alpha=1}^{d} \left(U_0(\tau)\right)_{\alpha}, \\
    U_{\text{heat}}(\tau) &:= \exp(i \mathbf{H}_{\text{heat}} \tau).
\end{aligned}
\end{equation}
Then we can obtain
\begin{equation}\label{eqn:heat:notation}
    U_{\text{heat}}(\tau) = \sum_{k=0}^{N_p-1} \tilde{U}_{0}^{k-N_p/2} (\tau) \otimes \ket{k}\bra{k}.
\end{equation}
The proof of this is provided in Appendix \ref{sec:appendix:circuit:heat:notation}.

We then employ $V_{0}, \tilde{V}_{0}, V_{\text{heat}}$ as the approximations of $U_{0}, \tilde{U}_{0}, U_{\text{heat}}$, respectively. Here, $V_0$ is derived by applying the first-order Lie-Trotter-Suzuki decomposition to $U_{0}$, specifically as
\begin{equation}\label{eqn:heat:V0:def}
\begin{aligned}
    & U_0(\tau) \\
    =& \exp \left( i \gamma_{0} \tau \sum_{j=1}^{n_x} ( s_{j}^{-} + s_{j}^{+}) -2i\gamma_{0} \tau I^{\otimes n_x}\right) \\
    \approx & \exp(-2i\gamma_{0} \tau) \prod_{j=1}^{n_x} \exp \left( i \gamma_{0} \tau ( s_{j}^{-} + s_{j}^{+})\right) \\
    \triangleq& V_0(\tau).
\end{aligned}
\end{equation}
Recalling Lemma \ref{lemma:evolution:bell} and setting $\gamma=\gamma_0, \lambda=0$, $V_0$ can be implemented by 
\begin{equation}\label{eqn:heat:V0:imple}
    V_0(\tau) = \text{Ph}(-2\gamma_{0}\tau) \prod_{j=1}^{n_x} I^{\otimes(n_x-j)} \otimes W_j(\gamma_{0}\tau, 0),
\end{equation}
where $\text{Ph}(\theta):=e^{i\theta} I^{\otimes n_x}$ is the global phase gate. Following that, $\tilde{V}_{0}$ and $V_{\text{heat}}$ can be expressed as
\begin{equation}\label{eqn:heat:gate:Vtau}
\begin{aligned}
    &\tilde{V}_{0}(\tau) := \prod_{\alpha=1}^{d} \left(V_0(\tau)\right)_{\alpha}, \\
    &V_{\text{heat}}(\tau) := \sum_{k=0}^{N_p-1} \tilde{V}_{0}^{k-N_p/2} (\tau) \otimes \ket{k}\bra{k}, \\
    =& \left( \tilde{V}^{-N_p/2}_{0}(\tau) \otimes I^{\otimes n_p} \right) \sum_{k=0}^{N_p-1} \tilde{V}_0^{k}(\tau) \otimes |k\rangle \langle k|.
\end{aligned}
\end{equation}
To effectively realize $V_{\text{heat}}$, we employ the subsequent method.
\begin{mylemma}
\label{lemma:H1:select}
    The unitary operator formulated as
    \begin{equation}
        Q = \sum_{k=0}^{N_p-1} Q_0^k \otimes \ket{k}\bra{k},    
    \end{equation}
    can be realized by performing the controlled operator $c-Q_0^{2^m}$ on the first register, controlled by the $m$-th qubit of the second register.
\end{mylemma}
\begin{myremark}
    In the most favorable scenario, when $Q_0^{k}(\tau)$ can be implemented with cost independent on $k$—such as $Q_0^{k}(\tau) = Q_0(k\tau)$—this strategy is significantly more efficient compared to implementing $Q(\tau)$ directly, as the efficiency is exponential. Nevertheless, in this study, we find that $\tilde{V}_{0}^k(\tau) \neq \tilde{V}_{0}(k\tau)$, resulting in $O(N_p)$ operations which offer a quadratic speed-up relative to directly executing $V_{\text{heat}}$.
\end{myremark}
The proof for Lemma \ref{lemma:H1:select} can be found in Appendix \ref{sec:appendix:circuit:lemma:H1:select}. Detailed quantum circuit designs for the unitary matrix $W_{j}(\gamma\tau, \lambda)$, the operators $V_{0}(\tau)$, $\tilde{V}_{0}(\tau)$, and $V_{\text{heat}}(\tau)$ are depicted in Figure \ref{fig:heat:circuit:Wj}, Figure \ref{fig:heat:circuit:V0}, Figure \ref{fig:heat:circuit:tildeV0}, and Figure \ref{fig:heat:circuit:V}, respectively.

Given $T=r\tau$, the operator $V_{\text{heat}}(\tau)$ is iterated $r$ times to give the state $\ket{\hat{\mathbf{v}}_{D}(T)} := V_{\text{heat}}^{r}(\tau) \ket{\hat{\mathbf{v}}(0)}$. An inverse quantum Fourier transform can then be conducted to retrieve $\ket{\mathbf{v}_{D}(T)}$. Subsequently, a measurement 
\begin{equation}
    M_{\geq 0} = \sum_{p_k \geq 0} M_{k} := \sum_{p_k \geq 0} I^{\otimes n_x d} \otimes \ket{k} \bra{k},
\end{equation}
is taken to select only the $\ket{k}, p_k \geq 0$ part of the state $\ket{\mathbf{v}_{D}(T)}$. Since $v(t,x,p) = e^{-p} u(t,x)$ for $p \geq 0$, it follows that
\begin{equation}
\begin{aligned}
    & \bra{\mathbf{v}_{D}(T)} M_{k} \ket{\mathbf{v}_{D}(T)} \approx \frac{e^{-2p_k} \left\| \mathbf{u}(T) \right\|^2}{\left\| \mathbf{v}_{D}(T) \right\|^2} \\
    =& \frac{e^{-2p_k} \left\| \mathbf{u}(T) \right\|^2}{\left\| \mathbf{v}(0) \right\|^2} = \frac{e^{-2p_k} \left\| \mathbf{u}(T) \right\|^2}{\left\| \mathbf{p} \right\|^2 \left\| \mathbf{u}(0) \right\|^2 }.
\end{aligned}
\end{equation}
After the measurement, the output is an approximation of $\ket{\mathbf{u}(T)} \ket{k}$ with probability $e^{-2p_k} \left\| \mathbf{u}(T) \right\|^2 / \left\| \mathbf{p} \right\|^2 \left\| \mathbf{u}(0) \right\|^2$. The full circuit to implement the Schr\"odingerisation method is shown in Figure \ref{fig:circuit:schro}, where $\mathcal{QFT}$ ($\mathcal{IQFT}$) denotes the (inverse) quantum Fourier transform.

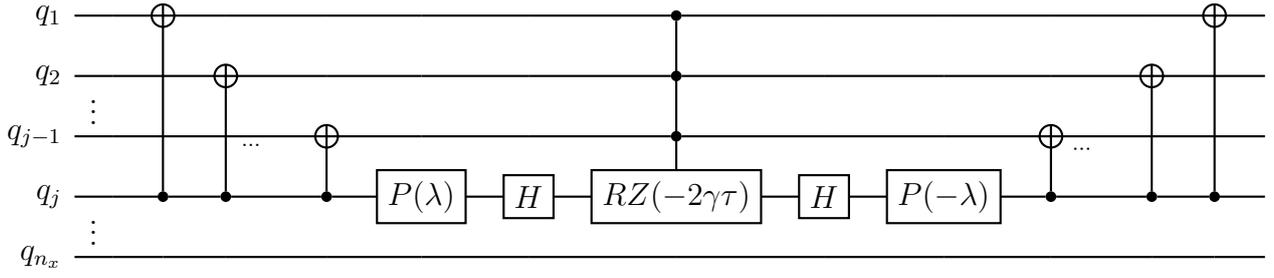
\begin{figure*}[htbp]
    \centering
    \begin{quantikz}[row sep={0.8cm,between origins}]
        \lstick{$q_1$} & \qw &  \targ{} & \qw & \qw & \qw & \qw & \qw & \ctrl{3} & \qw & \qw & \qw & \qw & \qw &\targ{} &\qw \\
        \lstick{$q_2$} & \qw{\vdots}  & \qw & \targ{} & \qw & \qw & \qw & \qw & \control{} & \qw & \qw & \qw & \qw & \targ{} &\qw & \qw \\
        \lstick{$q_{j-1}$} & \qw &  \qw & \qw &  \qw{\ldots} & \targ{} & \qw & \qw & \control{} & \qw & \qw & \targ{} & \qw{\cdots} & \qw & \qw & \qw \\
        \lstick{$q_{j}$} & \qw{\vdots}  & \ctrl{-3} & \ctrl{-2} & \qw & \ctrl{-1} & \gate{P(\lambda)} & \gate{H} & \gate{RZ(-2\gamma\tau)} & \gate{H} & \gate{P(-\lambda)} & \ctrl{-1} & \qw & \ctrl{-2} & \ctrl{-3} & \qw\\
        \lstick{$q_{n_x}$} & \qw & \qw &  \qw & \qw & \qw & \qw & \qw & \qw & \qw& \qw & \qw & \qw & \qw & \qw & \qw
    \end{quantikz}
    \caption{Quantum circuit for $W_j(\gamma\tau, \lambda)$.}
    \label{fig:heat:circuit:Wj}
\end{figure*}

\begin{figure*}[htbp]
    \centering
    \begin{quantikz}
        \lstick{$q_1$} & \qw & \gate{W_{1}(\gamma_{0}\tau,0)} &  \gate[2]{W_{2}(\gamma_{0}\tau,0)} & \qw & \gate[3]{W_{j}(\gamma_{0}\tau,0)} & \qw & \gate[4]{W_{n_x}(\gamma_{0}\tau,0)} & \gate{\text{Ph}(-2\gamma_{0}\tau)} & \qw \\
        \lstick{$q_2$} & \qw{\vdots}  & \qw &  & \qw{\cdots} &  & \qw{\ldots} &  & \qw & \qw  \\
        \lstick{$q_{j}$} & \qw{\vdots}  & \qw & \qw & \qw &  & \qw &  & \qw & \qw \\
        \lstick{$q_{n_x}$} & \qw & \qw & \qw & \qw & \qw & \qw &  & \qw & \qw 
    \end{quantikz}
    \caption{Quantum circuit for $V_0(\tau)$.}
    \label{fig:heat:circuit:V0}
\end{figure*}

\begin{figure*}[htbp]
    \centering
    \begin{quantikz}[row sep={0.8cm,between origins}]
        \lstick{$q^{1}$} & \qwbundle{n_x} & \gate[4]{\tilde{V}_{0}(\tau)} & \qw \\
        \lstick{$q^{2}$} & \qwbundle{n_x}{\vdots} & & \qw \\
        \lstick{$q^{j}$} & \qwbundle{n_x}{\vdots} & & \qw \\
        \lstick{$q^{d}$} & \qwbundle{n_x} & & \qw \\
    \end{quantikz} 
    :=
    \begin{quantikz}
        \lstick{$q^{1}$} & \qwbundle{n_x} & \gate{V_{0}(\tau)} &  \qw & \qw & \qw & \qw & \qw & \qw \\
        \lstick{$q^{2}$} & \qwbundle{n_x}{\vdots}  & \qw & \gate{V_{0}(\tau)} & \qw{\ldots} & \qw  & \qw & \qw  & \qw \\
        \lstick{$q^{j}$} & \qwbundle{n_x}{\vdots}  & \qw & \qw & \qw & \gate{V_{0}(\tau)} & \qw{\ldots} & \qw & \qw \\
        \lstick{$q^{d}$} & \qwbundle{n_x} & \qw & \qw & \qw & \qw & \qw & \gate{V_{0}(\tau)} & \qw 
    \end{quantikz}
    \caption{Quantum circuit for $\tilde{V}_0(\tau)$.}
    \label{fig:heat:circuit:tildeV0}
\end{figure*}

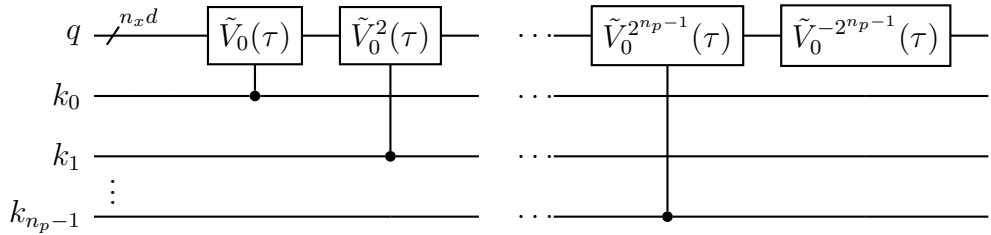
\begin{figure*}[htbp]
    \centering
    \begin{quantikz}[row sep={0.8cm,between origins}]
        \lstick{$q$} & \qwbundle{n_x d} & \qw & \gate{\tilde{V}_{0}(\tau)} & \gate{\tilde{V}_{0}^{2}(\tau)} & \qw & \ldots & \gate{\tilde{V}_{0}^{2^{n_p-1}}(\tau)} & \gate{\tilde{V}_{0}^{-2^{n_p-1}}(\tau)} & \qw \\
        \lstick{$k_0$} & \qw & \qw & \ctrl{-1} & \qw & \qw & \ldots & \qw & \qw & \qw \\
        \lstick{$k_1$} & \qw{\vdots} & \qw & \qw & \ctrl{-2} & \qw & \ldots & \qw & \qw & \qw \\
        \lstick{$k_{n_p-1}$} & \qw & \qw & \qw & \qw & \qw & \ldots & \ctrl{-3} & \qw & \qw \\
    \end{quantikz} 
    \caption{Quantum circuit for $V_{\text{heat}}(\tau)$.}
    \label{fig:heat:circuit:V}
\end{figure*}

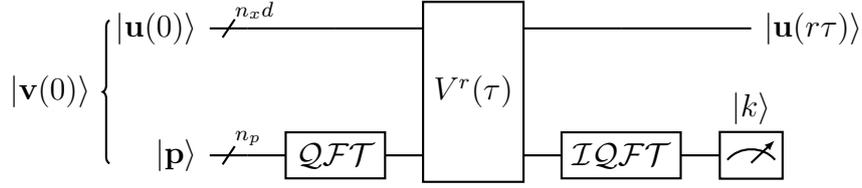
\begin{figure*}[htbp]
    \centering
    \begin{quantikz}
        \lstick[2]{$\ket{\mathbf{v}(0)}$}\,\,\, &  & \lstick { $| \mathbf{u}(0) \rangle$ } & \qwbundle{n_x d} & \qw & \gate[2]{V^r(\tau)} & \qw & \qw\rstick{$\ket{\mathbf{u}(r\tau)} $}\\
        & & \lstick{$\ket{\mathbf{p}}$} & \qwbundle{n_p} & \gate{\mathcal{QFT}} & & \gate{\mathcal{IQFT}} & \meter{$\ket{k}$} \\
    \end{quantikz} 
    \caption{Quantum circuit for the Schr\"odingerisation method, where the measurement requires only projection onto $\ket{k}, p_k > 0$, and $\mathcal{QFT}$ ($\mathcal{IQFT}$) denotes the (inverse) quantum Fourier transform.}
    \label{fig:circuit:schro}
\end{figure*}

\section{The advection equation}
\label{sec:advection}

For the second illustrative example, let us consider the scalar field $u$ governed by the $d$-dimensional advection equation,
\begin{equation}\label{mod:adv}
    \partial_t u(t,x) = \mathbf{a} \cdot \nabla u(t,x) = \sum_{\alpha=1}^{d} a_{\alpha}\partial_{x_{\alpha}} u(t,x),
\end{equation}
with periodic boundary conditions, where $\mathbf{a}$ is a constant vector field. For simplicity, we assume that $a_{\alpha}>0$ for $\alpha=1,\dots,d_0$, and $a_{\alpha}<0$ for $\alpha=d_0+1,\dots,d$. In classical simulation, Equation \eqref{mod:adv} can be solved using the upwind method as follows,
\begin{equation}\label{mod:adv:num}
\begin{aligned}
    & d\mathbf{u}(t) / dt \\
    =& \left( \sum_{\alpha=1}^{d_0} a_{\alpha}  \left( D_{P}^{+} \right)_{\alpha} + \sum_{\alpha=d_0+1}^{d} a_{\alpha} \left(D_{P}^{-}\right)_{\alpha} \right) \mathbf{u}(t) \\
    \triangleq & \mathbf{A} \mathbf{u}(t).
\end{aligned}
\end{equation}

In \cite{sato2024hamiltonian}, the central finite difference method was employed to solve this equation while aiming to preserve its Schr\"odinger structure. However, it is important to note that the central finite difference method introduces significant dispersion errors, leading to spurious numerical oscillations across discontinuities, which was shown in the numerical examples in \cite{sato2024hamiltonian}, thus is not the preferred approach for the advection equation when the solution becomes discontinuous. The center difference is natural for quantum simulation since its discrete spectral is purely imaginary, corresponding to unitary dynamics.  The upwind scheme, on the other hand, is dissipative, thus one cannot directly solve it via quantum simulation. Therefore we will utilize the Schr\"odingerization technique. 

\subsection{Schr\"odingerisation}

In this case, $-i\mathbf{A}$ is not Hermitian, so we decompose $\mathbf{A}$ into $\mathbf{A} = \mathbf{A}_1 + i \mathbf{A}_{2}$ with
\begin{equation}
\begin{aligned}
    \mathbf{A}_1 &= \frac{\mathbf{A} + \mathbf{A}^\dagger}{2}, \quad \mathbf{A}_{2} = \frac{\mathbf{A} - \mathbf{A}^\dagger}{2i}, \\
    \mathbf{A}_1 & =\mathbf{A}_1^\dagger, \quad \mathbf{A}_{2}=\mathbf{A}_{2}^\dagger.
\end{aligned}
\end{equation}

\subsubsection{Continuous formulation}
Applying the Schr\"odingerisation method, the warped phase transformation $\mathbf{v}(t,p) = e^{-p} \mathbf{u}(t)$, $p \geq 0$, satisfies
\begin{equation}
\begin{aligned}
\partial_t \mathbf{v}(t,p) &= \mathbf{A}_1 \mathbf{v}(t,p) + i \mathbf{A}_2\mathbf{v}(t,p) \\
&= - \mathbf{A}_1 \partial_p \mathbf{v}(t,p) + i \mathbf{A}_2\mathbf{v}(t,p),
\end{aligned}
\end{equation}
and this equation is extended to $p<0$ with initial data $\mathbf{v}(0,p) = e^{-|p|}\mathbf{u}(0)$. After performing the Fourier transform with respect to $p$, one obtains
\begin{equation}\label{mod:adv:eta}
    \partial_t \hat{\mathbf{v}}(t,\eta) = i (\eta \mathbf{A}_{1} + \mathbf{A}_{2}) \hat{\mathbf{v}}(t,\eta) \triangleq i \mathbf{H} \hat{\mathbf{v}}(t,\eta),
\end{equation}
with $\mathbf{H}$ being Hermitian.

\subsubsection{Discretization}

Similarly to the heat equation, we discretize $p$ as $-\pi R = p_0 < \cdots < p_{N_p} = \pi R$ with a mesh size of $\Delta p = 2\pi R/N_p$. We also define $\eta_{k} = \left(k-\frac{N_p}{2}\right) / R$ for $k=0, 1, \dots, N_p-1$. The initial conditions can be represented as,
\begin{equation}
\begin{aligned}
    \mathbf{v}(0) &:= \mathbf{u}(0) \otimes \mathbf{p}, \\
    \mathbf{\hat{v}}(0) &:= \mathcal{F}_{p} \mathbf{v}(0) = \mathbf{u}(0) \otimes \mathbf{\eta},
\end{aligned}
\end{equation}
with $\mathbf{u}(0):=[u_0(0);u_1(0);\dots;u_{N_x -1}(0)]$ and $\mathbf{p}$, $\mathbf{\eta}$ defined as in \eqref{eqn:heat:initial}. Noting that
\begin{equation}
\begin{aligned}
    \mathbf{A}_{1} &= \sum_{\alpha=1}^{d} \frac{\left|a_{\alpha}\right| h}{2} \left( D_{P}^{\Delta} \right)_{\alpha}, \\
    \mathbf{A}_{2} &= \sum_{\alpha=1}^{d} \frac{a_{\alpha}}{i} \left( D_{P}^{\pm} \right)_{\alpha}.
\end{aligned}
\end{equation}

The Hamiltonian $\mathbf{H}_{\text{adv}}$ is discretized as
\begin{equation}\label{eqn:adv:Hd}
\begin{aligned}
    &\mathbf{H}_{\text{adv}} = \mathbf{A}_{1} \otimes D_{\eta} + \mathbf{A}_{2} \otimes I^{\otimes n_p} \\
    =& \sum_{k=0}^{N_p-1} \left( k - \frac{N_p}{2}\right) \sum_{\alpha=1}^{d} |a_{\alpha}| \left( \mathbf{H}_{1} \right)_{\alpha} \otimes \ket{k}\bra{k} \\
    &+ \sum_{\alpha=1}^{d} a_{\alpha} \left( \mathbf{H}_{2} \right)_{\alpha}  \otimes I^{\otimes n_p}.
\end{aligned}
\end{equation}
with
\begin{equation}\label{def:adv:H1gamma}
\begin{aligned}
    \mathbf{H}_{1} &= \mathbf{H}_{1}^{(1)} + \mathbf{H}_{1}^{(2)}, \quad \gamma_{1} = \frac{1}{2hR}, \\ 
    \mathbf{H}_{1}^{(1)} &:= \gamma_{1} ( \sigma_{01}^{\otimes n_x} +  \sigma_{10}^{\otimes n_x}), \\
    \mathbf{H}_{1}^{(2)} &:= \gamma_{1} \sum_{j=1}^{n_x} ( s_{j}^{-} +  s_{j}^{+}) -2 \gamma_{1} I^{\otimes n_x},
\end{aligned}
\end{equation}

\begin{equation}\label{def:adv:H2gamma}
\begin{aligned}
    \mathbf{H}_{2} &= \mathbf{H}_{2}^{(1)} + \mathbf{H}_{2}^{(2)}, \quad \gamma_{2} = \frac{1}{2h}, \\
    \mathbf{H}_{2}^{(1)} &:= -i\gamma_{2} ( -\sigma_{01}^{\otimes n_x} +  \sigma_{10}^{\otimes n_x}), \\
    \mathbf{H}_{2}^{(2)} &:= -i \gamma_{2} \sum_{j=1}^{n_x} ( s_{j}^{-} - s_{j}^{+}).
\end{aligned}
\end{equation}

\subsection{Quantum circuit}
\label{sec:adv:circuit}
Similar to Section \ref{sec:heat:circuit}, we introduce the following notations to clarify the construction procedure,
\begin{equation}
\begin{aligned}
    &U_1(\tau) := \exp(i\mathbf{H}_{1}\tau), \  \tilde{U}_{1}(\tau) := \prod_{\alpha=1}^{d} \left(U_1(|a_\alpha|\tau)\right)_{\alpha} \\
    &U_2(\tau) := \exp(i\mathbf{H}_{2}\tau), \  \tilde{U}_{2}(\tau) := \prod_{\alpha=1}^{d} \left(U_2(a_\alpha\tau)\right)_{\alpha} \\
    &U_{\text{adv}}(\tau) := \exp(i \mathbf{H}_{\text{adv}} \tau).
\end{aligned}
\end{equation}
We then obtain the following result, with the proof provided in Appendix \ref{sec:appendix:circuit:adv:notation}.
\begin{equation}\label{eqn:adv:notation}
\begin{aligned}
    & U_{\text{adv}}(\tau) \approx U_{*}(\tau) :=\\
    & \left( \tilde{U}_{2}(\tau) \otimes I^{\otimes n_p} \right) \sum_{k=0}^{N_p-1} \tilde{U}_{1}^{k-N_p/2}(\tau) \otimes |k\rangle \langle k|,
\end{aligned}
\end{equation}
the proof of which is presented in Appendix \ref{sec:appendix:circuit:adv:notation}. Additionally, We denote $U_{1}^{(1)}, U_{1}^{(2)}, U_{1}^{(1)}, U_{1}^{(2)}$ as the Hamiltonian simulation operators corresponding to $\mathbf{H}_{1}^{(1)}, \mathbf{H}_{1}^{(2)}, \mathbf{H}_{2}^{(1)}, \mathbf{H}_{2}^{(2)}$, respectively. The approximations of these operators are denoted as the corresponding `$V$' operators.

The first-order Lie-Trotter-Suzuki decomposition of $U_{1}$ and $U_{2}$ gives
\begin{equation}\label{eqn:adv:unitary:U1U2}
\begin{aligned}
    U_1(\tau) &= \exp(i(\mathbf{H}_{1}^{(1)} + \mathbf{H}_{1}^{(2)})t) \\
    &\approx U_{1}^{(1)}(\tau) U_{1}^{(2)}(\tau), \\
    U_2(\tau) &= \exp(i(\mathbf{H}_{2}^{(1)} + \mathbf{H}_{2}^{(2)})t) \\
    &\approx U_{2}^{(1)}(\tau) U_{2}^{(2)}(\tau).
\end{aligned}
\end{equation}
Similar to Equation \eqref{eqn:heat:V0:def} and Equation \eqref{eqn:heat:V0:imple}, $U_1^{(2)}, U_2^{(2)}$ are approximated by 
\begin{equation}\label{eqn:adv:gate:V12}
\begin{aligned}
    &U_{1}^{(2)}(\tau) \approx V_1^{(2)}(\tau) := \\
    &\text{Ph}(-2\gamma_{1}\tau) \prod_{j=1}^{n_x} I^{\otimes(n_x-j)} \otimes W_j(\gamma_{1}\tau, 0), \\
    &U_{2}^{(2)}(\tau) \approx V_{2}^{(2)}(\tau) := \\
    &\prod_{j=1}^{n_x} I^{\otimes(n_x-j)} \otimes W_j\left(\gamma_{2}\tau, -\frac{\pi}{2}\right),
\end{aligned}
\end{equation}
by taking $\gamma = \gamma_{1}$, $\lambda=0$ and $\gamma=\gamma_{2}$, $\lambda=-\pi/2$ in Lemma \ref{lemma:evolution:bell}.

The quantum circuits for $U_{1}^{(1)}(\tau)$ and $U_{2}^{(1)}(\tau)$ can be explicitly constructed following Lemma \ref{lemma:changebasis} as
\begin{equation}\label{eqn:adv:gate:U11}
\begin{aligned}
    U_{1}^{(1)}(\tau) &= \exp \left( i\gamma_{1}\tau \left(\sigma_{01}^{\otimes n_x} + \sigma_{10}^{\otimes n_x}\right) \right) \\
    &= B^{(1)} \cdot \text{CRZ}_{n_x}^{1,\dots,n_x-1}(-2\gamma_{1}\tau) \cdot \left(B^{(1)}\right)^\dagger, \\
    B^{(1)}:&=\prod_{m=1}^{n_x-1}\text{CNOT}_{m}^{n_x} H_{n_x}\prod_{m=1}^{n_x-1} X_{m} \\
    &= B_{n_x}(0) \prod_{m=1}^{n_x-1} X_{m}.
\end{aligned}
\end{equation}
\begin{equation}\label{eqn:adv:gate:U21}
\begin{aligned}
    U_{2}^{(1)}(\tau) &= \exp \left( i\gamma_{2}\tau \left(i\sigma_{01}^{\otimes n_x} - i\sigma_{10}^{\otimes n_x}\right) \right) \\
    &= B^{(2)} \cdot \text{CRZ}_{n_x}^{1,\dots,n_x-1}(-2\gamma_2\tau) \cdot \left(B^{(2)}\right)^\dagger, \\
    B^{(2)}:&=\prod_{m=1}^{n_x-1}\text{CNOT}_{m}^{n_x} P_{n_x}\left(-\frac{\pi}{2} \right) H_{n_x} \prod_{m=1}^{n_x-1} X_{m} \\
    &= B_{n_x}\left( \frac{\pi}{2} \right) \prod_{m=1}^{n_x-1} X_{m}.
\end{aligned}
\end{equation}

Given $U_{1}^{(1)}$, $U_{2}^{(1)}$, $V_{1}^{(2)}$, $V_{2}^{(2)}$ defined in Equation \eqref{eqn:adv:gate:U11}, Equation \eqref{eqn:adv:gate:U21} and Equation \eqref{eqn:adv:gate:V12}, $U_{1}$, $U_{2}$, $\tilde{U}_{1}$ and $\tilde{U}_{2}$ are approximated by
\begin{equation}\label{eqn:adv:gate:V1V2}
\begin{aligned}
    U_{1}(\tau) &\approx V_{1}(\tau) := U_{1}^{(1)}(\tau) V_{1}^{(2)}(\tau), \\
    U_{2}(\tau) &\approx V_{2}(\tau) := U_{2}^{(1)}(\tau) V_{2}^{(2)}(\tau), \\
    \tilde{U}_{1}(\tau) &\approx \tilde{V}_{1}(\tau) := \prod_{\alpha=1}^{d} \left(V_1(|a_\alpha|\tau)\right)_{\alpha}, \\
    \tilde{U}_{2}(\tau) &\approx \tilde{V}_{2}(\tau) := \prod_{\alpha=1}^{d} \left(V_2(a_\alpha\tau)\right)_{\alpha}.
\end{aligned}
\end{equation}
Ultimately, $U_{\text{adv}}(\tau)$ is approximated by
\begin{equation}\label{eqn:adv:gate:Vtau}
\begin{aligned}
    &U_{\text{adv}}(\tau) \approx V_{\text{adv}}(\tau) := \\
    &\left( \tilde{V}_{2}(\tau) \otimes I^{\otimes n_p} \right) \sum_{k=0}^{N_p-1} \tilde{V}_{1}^{k-N_p/2}(\tau) \otimes |k\rangle \langle k|,
\end{aligned}
\end{equation}
The efficient implementation technique described in Lemma \ref{lemma:H1:select} is also applicable here. The detailed quantum circuits for the approximating unitary matrices $U_{1}^{(1)}(\tau)$, $U_{2}^{(1)}(\tau)$, $V_{1}(\tau)$, $V_{2}(\tau)$, $\tilde{V}_{1}(\tau)$, $\tilde{V}_{2}(\tau)$ and $V_{\text{adv}}(\tau)$ are shown in Figure \ref{fig:adv:circuit:U11}-\ref{fig:adv:circuit:V}. Given $V_{\text{adv}}(\tau)$, the circuit for implementing the Schr\"odingerisation method remains the same as in Figure \ref{fig:circuit:schro}.

\begin{figure*}[htbp]
    \centering
    \begin{quantikz}[row sep={0.8cm,between origins}]
        \lstick{$q_1$} & \qw &  \targ{} & \qw & \qw & \qw & \gate{X} & \ctrl{3} & \gate{X} & \qw & \qw & \qw &\targ{} &\qw \\
        \lstick{$q_2$} & \qw{\vdots}  & \qw & \targ{} & \qw & \qw & \gate{X} & \control{} & \gate{X} & \qw & \qw & \targ{} &\qw & \qw \\
        \lstick{$q_{n_x-1}$} & \qw &  \qw & \qw &  \qw{\ldots} & \targ{} & \gate{X} & \control{} & \gate{X} & \targ{} & \qw{\cdots} & \qw & \qw & \qw \\
        \lstick{$q_{n_x}$} & \qw  & \ctrl{-3} & \ctrl{-2} & \qw & \ctrl{-1} & \gate{H} & \gate{RZ(-2\gamma_{1}\tau)} & \gate{H} & \ctrl{-1} & \qw & \ctrl{-2} & \ctrl{-3} & \qw
    \end{quantikz}
    \caption{Quantum circuit for $U_{1}^{(1)}(\tau)$.}
    \label{fig:adv:circuit:U11}
\end{figure*}
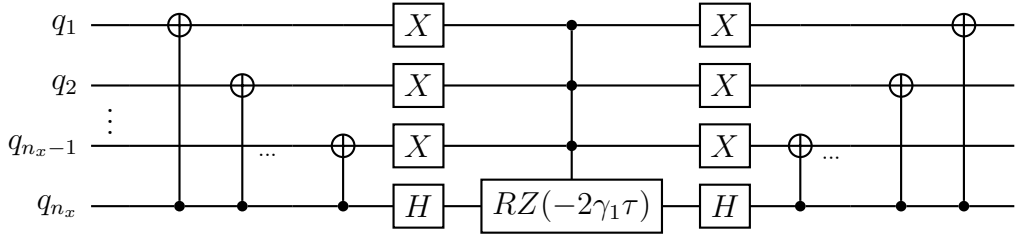

\begin{figure*}[htbp]
    \centering
    \begin{quantikz}[row sep={0.8cm,between origins}]
        \lstick{$q_1$} & \qw &  \targ{} & \qw & \qw & \qw & \qw & \gate{X} & \ctrl{3} & \gate{X} & \qw & \qw & \qw & \qw &\targ{} &\qw \\
        \lstick{$q_2$} & \qw{\vdots}  & \qw & \targ{} & \qw & \qw & \qw & \gate{X} & \control{} & \gate{X} & \qw & \qw & \qw & \targ{} &\qw & \qw \\
        \lstick{$q_{n_x-1}$} & \qw &  \qw & \qw &  \qw{\ldots} & \targ{} & \qw & \gate{X} & \control{} & \gate{X} & \qw & \targ{} & \qw{\cdots} & \qw & \qw & \qw \\
        \lstick{$q_{n_x}$} & \qw  & \ctrl{-3} & \ctrl{-2} & \qw & \ctrl{-1} & \gate{P\left( \frac{\pi}{2} \right)} & \gate{H} & \gate{RZ(-2\gamma_{2}\tau)} & \gate{H} & \gate{P\left( -\frac{\pi}{2} \right)} & \ctrl{-1} & \qw & \ctrl{-2} & \ctrl{-3} & \qw
    \end{quantikz}
    \caption{Quantum circuit for $U_{2}^{(1)}(\tau)$.}
    \label{fig:adv:circuit:U21}
\end{figure*}

\begin{figure*}[htbp]
    \centering
    \begin{quantikz}
        \lstick{$q_1$} & \qw & \gate{W_{1}(\gamma_{1}\tau,0)}\gategroup[4,steps=7,style={inner xsep=2pt}, background]{$V_{1}^{(2)}(\tau)$} &  \gate[2]{W_{2}(\gamma_{1}\tau,0)} & \qw & \gate[3]{W_{j}(\gamma_{1}\tau,0)} & \qw & \gate[4]{W_{n_x}(\gamma_{1}\tau,0)} & \gate{\text{Ph}(-2\gamma_{1}\tau)} & \gate[4]{U_{1}^{(1)}(\tau)} & \qw \\
        \lstick{$q_2$} & \qw{\vdots}  & \qw &  & \qw{\cdots} &  & \qw{\ldots} &  & \qw & & \qw \\
        \lstick{$q_{j}$} & \qw{\vdots} & \qw & \qw & \qw &  & \qw &  & \qw &  & \qw \\
        \lstick{$q_{n_x}$}  & \qw & \qw & \qw & \qw & \qw & \qw &  & \qw &  & \qw 
    \end{quantikz}
    \caption{Quantum circuit for $V_1(\tau)$.}
    \label{fig:adv:circuit:V1}
\end{figure*}

\begin{figure*}[htbp]
    \centering
    \begin{quantikz}
        \lstick{$q_1$} & \qw & \gate{W_{1}\left(\gamma_{2}\tau,-\frac{\pi}{2}\right)}\gategroup[4,steps=6,style={inner xsep=2pt}, background]{$V_{2}^{(2)}(\tau)$} &  \gate[2]{W_{2}\left(\gamma_{2}\tau,-\frac{\pi}{2}\right)} & \qw & \gate[3]{W_{j}\left(\gamma_{2}\tau,-\frac{\pi}{2}\right)} & \qw & \gate[4]{W_{n_x}\left(\gamma_{2}\tau,-\frac{\pi}{2}\right)} & \gate[4]{U_{2}^{(1)}(\tau)} & \qw \\
        \lstick{$q_2$} & \qw{\vdots}  & \qw &  & \qw{\cdots} &  & \qw{\ldots} &  &  & \qw  \\
        \lstick{$q_{j}$} & \qw{\vdots} & \qw & \qw & \qw &  & \qw &  &  & \qw \\
        \lstick{$q_{n_x}$}  & \qw & \qw & \qw & \qw & \qw & \qw &  &  & \qw 
    \end{quantikz}
    \caption{Quantum circuit for $V_2(\tau)$.}
    \label{fig:adv:circuit:V2}
\end{figure*}

\begin{figure*}[htbp]
    \centering
    \begin{quantikz}[row sep={0.8cm,between origins}]
        \lstick{$q^{1}$} & \qwbundle{n_x} & \gate[4]{\tilde{V}_{1}(\tau)} & \qw \\
        \lstick{$q^{2}$} & \qwbundle{n_x}{\vdots} & & \qw \\
        \lstick{$q^{j}$} & \qwbundle{n_x}{\vdots} & & \qw \\
        \lstick{$q^{d}$} & \qwbundle{n_x} & & \qw \\
    \end{quantikz} 
    :=
    \begin{quantikz}
        \lstick{$q^{1}$} & \qwbundle{n_x} & \gate{V_{1}(|a_{1}|\tau)} &  \qw & \qw & \qw & \qw & \qw & \qw \\
        \lstick{$q^{2}$} & \qwbundle{n_x}{\vdots}  & \qw & \gate{V_{1}(|a_{2}|\tau)} & \qw{\ldots} & \qw  & \qw & \qw  & \qw \\
        \lstick{$q^{j}$} & \qwbundle{n_x}{\vdots}  & \qw & \qw & \qw & \gate{V_{1}(|a_{j}|\tau)} & \qw{\ldots} & \qw & \qw \\
        \lstick{$q^{d}$} & \qwbundle{n_x} & \qw & \qw & \qw & \qw & \qw & \gate{V_{1}(|a_{n_x}|\tau)} & \qw 
    \end{quantikz}
    \caption{Quantum circuit for $\tilde{V}_1(\tau)$.}
    \label{fig:adv:circuit:tildeV1}
\end{figure*}

\begin{figure*}[htbp]
    \centering
    \begin{quantikz}[row sep={0.8cm,between origins}]
        \lstick{$q^{1}$} & \qwbundle{n_x} & \gate[4]{\tilde{V}_{2}(\tau)} & \qw \\
        \lstick{$q^{2}$} & \qwbundle{n_x}{\vdots} & & \qw \\
        \lstick{$q^{j}$} & \qwbundle{n_x}{\vdots} & & \qw \\
        \lstick{$q^{d}$} & \qwbundle{n_x} & & \qw \\
    \end{quantikz} 
    :=
    \begin{quantikz}
        \lstick{$q^{1}$} & \qwbundle{n_x} & \gate{V_{2}(a_{1}\tau)} &  \qw & \qw & \qw & \qw & \qw & \qw \\
        \lstick{$q^{2}$} & \qwbundle{n_x}{\vdots}  & \qw & \gate{V_{2}(a_{2}\tau)} & \qw{\ldots} & \qw  & \qw & \qw  & \qw \\
        \lstick{$q^{j}$} & \qwbundle{n_x}{\vdots}  & \qw & \qw & \qw & \gate{V_{2}(a_{j}\tau)} & \qw{\ldots} & \qw & \qw \\
        \lstick{$q^{d}$} & \qwbundle{n_x} & \qw & \qw & \qw & \qw & \qw & \gate{V_{2}(a_{n_x}\tau)} & \qw 
    \end{quantikz}
    \caption{Quantum circuit for $\tilde{V}_2(\tau)$.}
    \label{fig:adv:circuit:tildeV2}
\end{figure*}

\begin{figure*}[htbp]
    \centering
    \begin{quantikz}[row sep={0.8cm,between origins}]
        \lstick{$q$} & \qwbundle{n_x d} & \gate{\tilde{V}_{1}(\tau)} & \gate{\tilde{V}_{1}^{2}(\tau)} & \qw & \ldots & \gate{\tilde{V}_{1}^{2^{n_p-1}}(\tau)} & \gate{\tilde{V}_{1}^{-2^{n_p-1}}(\tau)} & \gate{\tilde{V}_{2}(\tau)} & \qw \\
        \lstick{$k_0$} & \qw & \ctrl{-1} & \qw & \qw & \ldots & \qw & \qw & \qw & \qw \\
        \lstick{$k_1$} & \qw{\vdots} & \qw & \ctrl{-2} & \qw & \ldots & \qw & \qw & \qw & \qw \\
        \lstick{$k_{n_p-1}$} & \qw & \qw & \qw & \qw & \ldots & \ctrl{-3} & \qw & \qw & \qw \\
    \end{quantikz} 
    \caption{Quantum circuit for $V_{\text{adv}}(\tau)$.}
    \label{fig:adv:circuit:V}
\end{figure*}
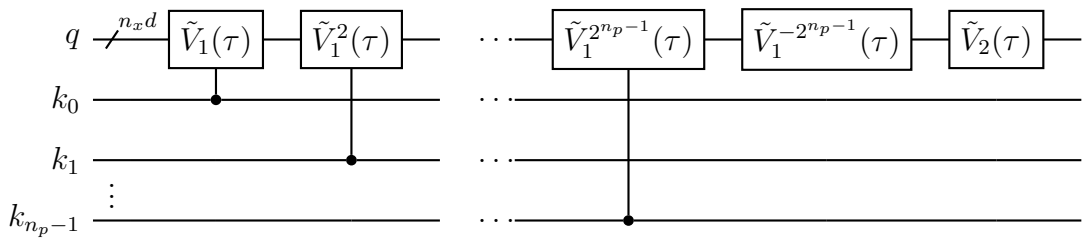

\section{Complexity analysis}
\label{sec:complexity}

In this subsection, we estimate the complexity of the quantum circuits constructed in the previous sections and demonstrate their scalability. 

Typical quantum algorithms for solving partial differential equations often depend on oracles for general matrices, but the cost of preparing these oracles is often not considered, which can make it challenging to assess their feasibility accurately. In our study, we provide complete algorithms for the heat equation and the advection equation (with constant coefficients) that does not require any oracles, making it more practical. The complexity results we present here are based on basic gates, i.e., single-qubit gates and CNOT gates, ensuring clarity and accuracy.

\subsection{The heat equation}
\label{sec:complexity:heat}

\begin{mylemma}\label{lemma:heat:error:tau}
Consider the Schr\"odinger equation $d|\mathbf{u}(t)\rangle /dt = i\mathbf{H}_{\text{heat}}|\mathbf{u}(t)\rangle $, where the Hamiltonian $\mathbf{H}_{\text{heat}}$ is given by Equation \eqref{eqn:heat:Hd}. The time evolution operator $U_{\text{heat}}(\tau)=\exp(i\mathbf{H}_{\text{heat}}\tau)$ with a time increment $\tau$ can be approximated by the unitary $V_{\text{heat}}(\tau)$ in Equation \eqref{eqn:heat:gate:Vtau}, and its explicit circuit implementation is depicted in Figure \ref{fig:heat:circuit:Wj}-\ref{fig:heat:circuit:V}. Furthermore, the approximation error in terms of the operator norm is upper-bounded by
\begin{equation}
    \left\| U_{\text{heat}}(\tau) - V_{\text{heat}}(\tau) \right\| \leq \frac{dN_p \gamma_{0}^2 \tau^2 (n_x-1)}{4},
\end{equation}
and
\begin{equation}
    \left\| U_{\text{heat}}(T) - V_{\text{heat}}^{r}(\tau) \right\| \leq \frac{dN_p \gamma_{0}^2 T^2 (n_x-1)}{4r},
\end{equation}
where $r = T/\tau$, $d$ denotes the spatial dimension, $N_p=2^{n_p}$ and $N_x=2^{n_x}$ represent the number of grid points for the variables $p$ and $x$ (per dimension), respectively. Additionally, $\gamma_{0}$ is defined in Equation \eqref{def:heat:H0gamma}.
\end{mylemma}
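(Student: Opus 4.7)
The plan is to reduce the multi-dimensional, multi-step error to a single-dimension first-order Lie-Trotter error, and then to compute the residual commutator sum in closed form. The reduction proceeds in four steps.

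First, I would exploit the block-diagonal structure of $U_{\text{heat}}(\tau)$ and $V_{\text{heat}}(\tau)$ in the ancilla register $\{|k\rangle\langle k|\}_{k=0}^{N_p-1}$. Writing $\tilde{U}_0(\tau) := \prod_{\alpha=1}^d (U_0(\tau))_\alpha$ and $\tilde{V}_0(\tau) := \prod_{\alpha=1}^d (V_0(\tau))_\alpha$ (both product forms are exact since the $(\mathbf{H}_0)_\alpha$ act on disjoint tensor factors and hence commute), the operator-norm error equals $\max_k \|\tilde U_0(\tau)^{k-N_p/2} - \tilde V_0(\tau)^{k-N_p/2}\|$. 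Applying the unitary power inequality $\|A^m - B^m\| \leq |m|\,\|A-B\|$ together with $|k - N_p/2| \leq N_p/2$, followed by the telescoping estimate $\|\prod_\alpha A_\alpha - \prod_\alpha B_\alpha\| \leq \sum_\alpha \|A_\alpha - B_\alpha\|$ over the $d$ spatial factors, yields
\[
\|U_{\text{heat}}(\tau) - V_{\text{heat}}(\tau)\| \;\leq\; \tfrac{N_p}{2}\cdot d\cdot \|U_0(\tau) - V_0(\tau)\|.
\]

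Next I turn to the single-dimension error. The common global phase $\text{Ph}(-2\gamma_0\tau)$ cancels, so it suffices to estimate $\|e^{i\gamma_0\tau \sum_{j=1}^{n_x} T_j} - \prod_{j=1}^{n_x} e^{i\gamma_0\tau T_j}\|$, where $T_j := s_j^- + s_j^+$. Iterating the two-term first-order Trotter bound $\|e^{it(A+B)} - e^{itA}e^{itB}\| \leq (t^2/2)\|[A,B]\|$ through $n_x - 1$ successive splittings and applying the triangle inequality to the resulting nested commutators produces
\[
\|U_0(\tau) - V_0(\tau)\| \;\leq\; \frac{\gamma_0^2 \tau^2}{2} \sum_{1\leq j < k \leq n_x} \|[T_j, T_k]\|.
\]

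The decisive step is the commutator census $\sum_{j<k}\|[T_j, T_k]\| = n_x - 1$. The key structural observation is that for $j \geq 2$, $T_j$ is non-vanishing only on computational basis states whose last $j$ bits read $10\cdots 0$ or $01\cdots 1$. For indices $2 \leq j < k$ these supports are disjoint, because the last $j$ bits of any state in $T_k$'s support are either all zeros or all ones, neither of which matches a $T_j$ pattern; consequently $T_j T_k = T_k T_j = 0$ and $[T_j, T_k] = 0$. The only surviving commutators therefore involve $T_1 = I^{\otimes(n_x-1)} \otimes X$, and a direct calculation yields $[T_1, T_k] = \bigl(\sigma_{01}\otimes\sigma_{10}^{\otimes(k-2)} - \sigma_{10}\otimes\sigma_{01}^{\otimes(k-2)}\bigr)\otimes Z$ on the last $k$ qubits. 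The first factor vanishes outside $\mathrm{span}\{|10\cdots 0\rangle, |01\cdots 1\rangle\}$ where it restricts to $iY$, and $Z$ is unitary, so $\|[T_1, T_k]\| = 1$ for every $k \in \{2,\ldots,n_x\}$. Summing over $k$ produces the claimed value $n_x - 1$.

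Combining the pieces gives $(N_p/2)\cdot d\cdot(\gamma_0^2\tau^2/2)(n_x - 1) = dN_p\gamma_0^2\tau^2(n_x - 1)/4$, matching the lemma. I expect the main obstacle to be this commutator census: a naive estimate $\|[T_j, T_k]\| \leq 2\|T_j\|\|T_k\| \leq 2$ applied to all $\binom{n_x}{2}$ pairs would be loose by a factor of $n_x$, so the sharp bound relies entirely on recognising the disjoint-support structure of $T_j$ for $j \geq 2$ and computing $[T_1, T_k]$ explicitly via the reflection through the two-dimensional span that carries all of its non-trivial action.
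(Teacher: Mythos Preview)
Your proposal is correct and follows essentially the same route as the paper's proof: both reduce to the single-dimension error via the block-diagonal structure, the unitary-power inequality, and telescoping over the $d$ spatial factors, then bound $\|U_0(\tau)-V_0(\tau)\|$ by the first-order Trotter commutator sum and evaluate $\sum_{j<k}\|[T_j,T_k]\|=n_x-1$ by showing that only the $n_x-1$ commutators involving $T_1$ survive. The only cosmetic difference is that the paper outsources the commutator identities to its appendix (citing the Trotter-with-commutator-scaling result of Childs et al.), whereas you argue the vanishing of $[T_j,T_k]$ for $j,k\geq 2$ directly via the disjoint-support observation; both arguments are equivalent and yield the same constants.
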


\begin{proof}
    According to the theory of the Trotter splitting error with commutator scaling \cite{childs2021theory} and the analysis in \cite{sato2024hamiltonian}, we have
    \begin{equation}\label{eqn:heat:error:U0V0}
    \begin{aligned}
        &\left\| U_0(\tau) - V_0(\tau) \right\| \\
        \leq& \frac{\gamma_{0}^2\tau^2}{2} \sum_{j=1}^{n_x} \sum_{j^\prime=j+1}^{n_x} \left\| \left[(s_{j}^{-}+s_{j}^{+}),  (s_{j^\prime}^{-}+s_{j^\prime}^{+}) \right] \right\| \\
        \leq& \frac{\gamma_{0}^2\tau^2(n_x-1)}{2},
    \end{aligned}
    \end{equation}
    where we have applied the commutator results presented in Appendix \ref{sec:appendix:equations:commutator}, Equation \eqref{eqn:appendix:commutator:s1+}. From Equation \eqref{eqn:appendix:operator:heat}, we have
    \begin{equation}
    \begin{aligned}
        &\left\| U_{\text{heat}}(\tau) - V_{\text{heat}}(\tau) \right\| \\
        \leq& \frac{d N_p}{2} \left\| U_0(\tau) - V_0(\tau) \right\| \\
        \leq& \frac{dN_p \gamma_{0}^2 \tau^2 (n_x-1)}{4}.
    \end{aligned}
    \end{equation}

    \begin{equation}
    \begin{aligned}
        &\left\| U_{\text{heat}}(T) - V_{\text{heat}}^{r}(\tau) \right\| \\
        \leq& r \left\| U_{\text{heat}}(\tau) - V_{\text{heat}}(\tau) \right\| \\
        \leq& \frac{dN_p \gamma_{0}^2 T^2 (n_x-1)}{4r}.
    \end{aligned}
    \end{equation}
\end{proof}

\begin{mylemma}\label{lemma:heat:complexity:tau}
    The approximated time evolution operator $V_{\text{heat}}(\tau)$ in Equation \eqref{eqn:heat:gate:Vtau} can be implemented using $O(dN_p n_x)$ single-qubit gates and at most $\mathcal{Q}_{V_{\text{heat}}} = O(d N_p n_x^2)$ CNOT gates for $n_x \geq 3$.
\end{mylemma}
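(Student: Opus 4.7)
The plan is to count CNOT and single-qubit gates in the explicit circuit of Figures \ref{fig:heat:circuit:Wj}--\ref{fig:heat:circuit:V}, proceeding from the innermost block $W_j$ outward to $V_{\text{heat}}(\tau)$.

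First, I would count the cost of $W_j(\gamma_0\tau,0)$ from Figure \ref{fig:heat:circuit:Wj}: the change-of-basis block $B_j(0)$, then a $(j-1)$-controlled $RZ(-2\gamma_0\tau)$ on qubit $j$, then $B_j(0)^\dagger$. Each of $B_j(0)$ and $B_j(0)^\dagger$ contributes $j-1$ CNOTs and $O(1)$ single-qubit gates (a Hadamard, and a Phase gate when $\lambda\neq 0$). The multi-controlled $RZ$ with $j-1$ controls is the nontrivial piece: via a standard decomposition (e.g., a chain of Toffoli-style gadgets that compute the AND of the controls into the target phase, or a Barenco-style recursion), it is realizable with $O(j)$ CNOTs and $O(1)$ single-qubit rotations. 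Hence $W_j$ costs $O(j)$ CNOTs and $O(1)$ single-qubit gates.

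Next, I would propagate this upward. Since $V_0(\tau) = \mathrm{Ph}(-2\gamma_0\tau)\prod_{j=1}^{n_x} I^{\otimes(n_x-j)}\otimes W_j(\gamma_0\tau,0)$, summing yields
\[
\sum_{j=1}^{n_x} O(j) = O(n_x^2) \text{ CNOTs}, \qquad O(n_x) \text{ single-qubit gates}.
\]
The operator $\tilde{V}_0(\tau)$ is $d$ parallel copies of $V_0$ on disjoint $n_x$-qubit registers (Figure \ref{fig:heat:circuit:tildeV0}), multiplying both counts by $d$. Adding a single control to every gate inside $\tilde{V}_0$ (required by the controlled blocks of Figure \ref{fig:heat:circuit:V}) incurs only a constant multiplicative overhead: each controlled-CNOT becomes a Toffoli, decomposable into $O(1)$ CNOTs, and each controlled single-qubit rotation costs $O(1)$ CNOTs and single-qubit gates.

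Finally, Figure \ref{fig:heat:circuit:V} implements a controlled $\tilde{V}_0(\tau)$ applied $2^m$ times for each $m=0,\dots,n_p-1$, followed by $N_p/2$ uncontrolled applications of $\tilde{V}_0^{-1}(\tau)$. The total number of $\tilde{V}_0^{\pm 1}$ applications is $\sum_{m=0}^{n_p-1} 2^m + N_p/2 = \tfrac{3}{2}N_p - 1 = O(N_p)$, giving an overall cost of $O(dN_p n_x^2)$ CNOTs and $O(dN_p n_x)$ single-qubit gates, as claimed. The main obstacle will be rigorously justifying the $O(j)$ CNOT bound for the $(j-1)$-controlled $RZ$ while keeping only $O(1)$ single-qubit rotations; this is where a specific decomposition (ancilla-assisted AND-computation or an explicit ancilla-free construction) must be invoked, and it is likely the source of the technical hypothesis $n_x\geq 3$, which ensures that the multi-control regime ($j\geq 3$) arises and the decomposition's constants behave uniformly.
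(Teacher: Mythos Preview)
Your proposal is correct and follows essentially the same route as the paper: count $W_j$ as $O(j)$ CNOTs plus $O(1)$ single-qubit gates, sum to get $O(n_x^2)$ CNOTs for $V_0$, multiply by $d$ and by the $O(N_p)$ total applications of (controlled) $\tilde V_0^{\pm 1}$ in Figure~\ref{fig:heat:circuit:V}. The paper differs only in specificity: it invokes an explicit ancilla-free decomposition (from \cite{vale2023decomposition,sato2024hamiltonian}) giving at most $16j-40$ CNOTs for a $(j-1)$-controlled $RZ$ when $j\ge 3$, which both resolves the obstacle you flagged and explains the hypothesis $n_x\ge 3$; it then tracks exact constants ($\mathcal{Q}_{V_0}=9n_x^2-33n_x+34$, $\mathcal{Q}_{c\text{-}V_0}=16n_x^2-22n_x+10$) rather than asymptotics.
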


\begin{proof}
The implementation of $V_{\text{heat}}(\tau)$ involves a maximum of $2^{n_p-1}$ $\tilde{V}_{0}(-\tau)$ gates and $\sum_{m=0}^{n_p-1} 2^{m} = 2^{n_p}-1$ controlled $\tilde{V}_{0}(\tau)$ gates. Each $\tilde{V}_{0}$ gate consists of $d$ $V_{0}$ gates. Furthermore, $V_{0}$ consists of a phase gate $\text{Ph}$ and $W_j, j=1,\dots,n$. $W_j$ can be decomposed into a multi-controlled RZ gate (a controlled RZ gate for $j = 2$), $2$ Hadmard gates and $2(j-1)$ CNOT gates in the case $\lambda=0$. Hence the number of single-qubit gates is $d2^{n_p-1}(2n_x+1) = O(dN_p n_x)$. The total number of CNOT gates in the circuit implementation of $V_{\text{heat}}$ is determined by
\begin{equation}
        \mathcal{Q}_{V_{\text{heat}}} = d2^{n_p-1} \mathcal{Q}_{V_{0}} + d(2^{n_p} - 1) \mathcal{Q}_{c-V_{0}}.
\end{equation}
 According to the decomposition techniques presented in \cite{sato2024hamiltonian,vale2023decomposition}, it is known that a multi-controlled RZ or RX gate with $(j - 1)$ control qubits can be decomposed into single-qubit gates and at most $16j - 40$ CNOT gates. Hence
\begin{equation}
\begin{aligned}
    \mathcal{Q}_{V_{0}} &= 2 + \sum_{j=3}^{n_x} (16j - 40) + \sum_{j=2}^{n_x} 2(j-1) \\
    &= 9n_x^2 - 33n_x + 34.
\end{aligned}
\end{equation}
The controlled $V_{0}$ gate consists of a controlled phase gate and controlled $W_j, j=1,\dots,n$. $c-W_j$ can be decomposed into a multi-controlled RZ gate (a controlled RZ gate for $j = 1$), two controlled H gates, and $2(j-1)$ controlled CNOT gates. Therefore, the total number of gates in the circuit implementation of the controlled $V_{0}$ gate can be upper-bounded by
\begin{equation}
\begin{aligned}
    &\mathcal{Q}_{c-V_{0}} \\
    =& 2 + \sum_{j=3}^{n_x+1} (16j-40) + 2n_x + \sum_{j=2}^{n_x} 2(j-1) * 8 \\
    =& 16n_x^2 - 22n_x + 10.
\end{aligned}
\end{equation}
Then, $V_{\text{heat}}$ can be implemented using single-qubit gates and at most
\begin{equation}
\begin{aligned}
    \mathcal{Q}_{V_{\text{heat}}} &= d2^{n_p-1} \mathcal{Q}_{V_{0}} + d(2^{n_p} - 1) \mathcal{Q}_{c-V_{0}} \\
    &= O(d N_p n_x^2)
\end{aligned}
\end{equation}
CNOT gates.

\end{proof}

\begin{mylemma}\label{lemma:heat:complexity:T}
Let $\mathbf{H}_{\text{heat}}$ be the Hamiltonian defined in Equation \eqref{eqn:heat:Hd}. The time evolution operator $U_{\text{heat}}(T) = \exp(i\mathbf{H}_{\text{heat}}T)$ for a time duration $T$ can be implemented on a $(dn_x + n_p)$-qubit system using quantum circuits with $O(d^2 n_x^2 N_p^2 \gamma_{0}^2 T^2 /\varepsilon)$ single-qubit gates and $O(d^2 n_x^3 N_p^2 \gamma_{0}^2 T^2 /\varepsilon)$ non-local gates, within an additive error of $\varepsilon$. The quantum circuit for $U_{\text{heat}}(T)$ involves repetitive applications of the one-time step unitary $V_{\text{heat}}(\tau)$ depicted in Figure \ref{fig:heat:circuit:V}.
\end{mylemma}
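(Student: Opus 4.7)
The plan is to Trotterize the full evolution: write $T = r\tau$, approximate $U_{\text{heat}}(T)$ by $V_{\text{heat}}(\tau)^r$, choose $r$ large enough that the cumulative approximation error is below $\varepsilon$, and then count gates by multiplying the per-step cost from Lemma \ref{lemma:heat:complexity:tau} by $r$.

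First I would establish the telescoping bound on the compounded error. Since $U_{\text{heat}}(\tau)$ is unitary and $V_{\text{heat}}(\tau)$, being a product of gates appearing in Figures \ref{fig:heat:circuit:Wj}--\ref{fig:heat:circuit:V}, is also unitary, the standard identity
\begin{equation*}
U_{\text{heat}}(\tau)^r - V_{\text{heat}}(\tau)^r = \sum_{j=0}^{r-1} U_{\text{heat}}(\tau)^{r-1-j}\bigl(U_{\text{heat}}(\tau) - V_{\text{heat}}(\tau)\bigr) V_{\text{heat}}(\tau)^{j}
\end{equation*}
yields $\|U_{\text{heat}}(T) - V_{\text{heat}}(\tau)^r\| \leq r\,\|U_{\text{heat}}(\tau) - V_{\text{heat}}(\tau)\|$. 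Plugging in the single-step bound from Lemma \ref{lemma:heat:error:tau} gives a total error of at most $r \cdot dN_p \gamma_0^2 \tau^2 (n_x-1)/4 = dN_p \gamma_0^2 T\tau (n_x-1)/4$.

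Next I would solve for the smallest $r$ making this error at most $\varepsilon$. Using $\tau = T/r$, the bound becomes $dN_p \gamma_0^2 T^2 (n_x-1)/(4r) \leq \varepsilon$, so it suffices to take $r = \lceil dN_p \gamma_0^2 T^2 (n_x-1)/(4\varepsilon)\rceil = O\bigl(d N_p \gamma_0^2 T^2 n_x/\varepsilon\bigr)$.

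Finally I would combine this with the per-step gate counts from Lemma \ref{lemma:heat:complexity:tau}: each application of $V_{\text{heat}}(\tau)$ costs $O(dN_p n_x)$ single-qubit gates and $O(dN_p n_x^2)$ CNOT (non-local) gates, so the total counts are $r\cdot O(dN_p n_x) = O(d^2 N_p^2 n_x^2 \gamma_0^2 T^2/\varepsilon)$ single-qubit gates and $r\cdot O(dN_p n_x^2) = O(d^2 N_p^2 n_x^3 \gamma_0^2 T^2/\varepsilon)$ non-local gates, matching the claim. The qubit count $dn_x + n_p$ is immediate from the register structure ($d$ spatial registers of $n_x$ qubits for $\mathbf{u}$, and $n_p$ qubits for $\hat{\mathbf{v}}$'s momentum variable). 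The only mildly subtle step is the telescoping bound, which requires unitarity of $V_{\text{heat}}(\tau)$ to avoid an exponential blowup $\|V\|^j$; everything else is bookkeeping that directly assembles Lemma \ref{lemma:heat:error:tau} and Lemma \ref{lemma:heat:complexity:tau}.
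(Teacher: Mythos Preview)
Your proposal is correct and follows essentially the same approach as the paper: bound the cumulative error by $r\|U_{\text{heat}}(\tau)-V_{\text{heat}}(\tau)\|$ using Lemma~\ref{lemma:heat:error:tau}, solve for $r$, then multiply the per-step gate counts from Lemma~\ref{lemma:heat:complexity:tau} by $r$. The only difference is that you spell out the telescoping identity and the role of unitarity explicitly, whereas the paper simply asserts the inequality $\|U^r-V^r\|\le r\|U-V\|$.
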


\begin{proof}
To suppress the error in the simulation over the total time $T$ to a small value $\varepsilon$, it is sufficient to divide the total time $T$ into $r$ intervals, where $r = T/\tau$, such that 
\begin{equation}
    \left\| U_{\text{heat}}^r(\tau) - V_{\text{heat}}^r(\tau) \right\| \leq \frac{dN_p \gamma_{0}^2 T^2 (n_x-1)}{4r} \leq \varepsilon,
\end{equation}
which can be rearranged as 
\begin{equation}
        r \geq \frac{dN_p \gamma_{0}^2 T^2 (n_x-1)}{4\varepsilon}.
\end{equation}

According to Lemma \ref{lemma:heat:complexity:tau}, $V_{\text{heat}}(\tau)$ can be implemented using $O(dN_pn_x)$ single-qubit gates and at most $\mathcal{Q}_{V_{\text{heat}}} = O(d N_p n_x^2)$ CNOT gates. Therefore, $V_{\text{heat}}^r(\tau)$ can be implemented using $O(r\mathcal{Q}_{V_{\text{heat}}}) = O(d^2 n_x^3 N_p^2 \gamma_{0}^2 T^2 /\varepsilon)$ non-local gates and $O(d^2 n_x^2 N_p^2 \gamma_{0}^2 T^2 /\varepsilon)$ single-qubit gates.
\end{proof}

\begin{mythm}\label{thm:heat:complexity:solution}
    Given the heat equation  \eqref{mod:heat}, the state $\ket{\mathbf{u}(t)}$, where $\mathbf{u}(t)$ is the solution of Equation \eqref{mod:heat:num} with a mesh size $h$, can be prepared with the precision $\varepsilon$ using the Schr\"odingerization method depicted in Figure \ref{fig:heat:circuit:V} and Figure \ref{fig:circuit:schro}. This preparation can be achieved using at most $\tilde{O}\left( d^2 T^2 \left\| \mathbf{u}(0) \right\|^3 / (\left\| \mathbf{u}(T) \right\|^3 h^4 \varepsilon^3) \right)$ single-qubit gates and CNOT gates.
\end{mythm}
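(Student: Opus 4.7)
The plan is to combine Lemma \ref{lemma:heat:complexity:T}, which gives the per-shot gate cost of an $\varepsilon'$-accurate Hamiltonian simulation, with the post-selection cost inherent to the Schrödingerisation measurement shown in Figure \ref{fig:circuit:schro}. Explicitly, I would first observe that the algorithm produces the state $\ket{\mathbf{u}(T)}\ket{p_k}$ only conditional on projecting the $p$-register onto some $\ket{p_k}$ with $p_k>0$, and that this event occurs with probability $P_k \approx e^{-2p_k}\|\mathbf{u}(T)\|^2/\|\mathbf{v}(0)\|^2$. Taking $p_k=O(1)$ and noting that $\|\mathbf{v}(0)\|=\|\mathbf{u}(0)\|\cdot\|\mathbf{p}\|=O(\|\mathbf{u}(0)\|)$ (since $\sum_k e^{-2|p_k|}\Delta p$ is a convergent Riemann sum independent of $N_p$ and $R$ up to logarithmic factors), the expected number of repetitions to obtain one successful shot is $O(\|\mathbf{u}(0)\|^2/\|\mathbf{u}(T)\|^2)$.

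Next I would convert between the required precision $\varepsilon$ of the output state and the internal simulation precision $\varepsilon'$. Because the post-selected, renormalised state picks up a factor $\|\mathbf{v}(0)\|/(\,e^{-p_k}\|\mathbf{u}(T)\|\,)$ in the normalisation, a simulation error $\varepsilon'$ in $V_{\text{heat}}^r(\tau)\ket{\hat{\mathbf{v}}(0)}$ is amplified by this factor in the renormalised post-selected state, so it suffices to choose $\varepsilon' = O\!\bigl(\varepsilon\,\|\mathbf{u}(T)\|/\|\mathbf{u}(0)\|\bigr)$. Substituting into the gate count of Lemma \ref{lemma:heat:complexity:T} gives a per-shot cost of
\begin{equation*}
O\!\left(\frac{d^2 n_x^3 N_p^2 \gamma_{0}^2 T^2 \,\|\mathbf{u}(0)\|}{\varepsilon\,\|\mathbf{u}(T)\|}\right),
\end{equation*}
and multiplying by the $O(\|\mathbf{u}(0)\|^2/\|\mathbf{u}(T)\|^2)$ expected repetitions yields the total expected gate complexity.

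I would then choose the remaining free parameters $R$ and $N_p$. The warped-phase discretisation requires $R$ large enough that the truncated $p$-domain captures $e^{-|p|}$ (which costs only $O(\log(1/\varepsilon))$), and $N_p$ large enough that the trapezoidal/Fourier approximation of $\mathcal{F}_p^{-1}$ has error $O(\varepsilon\,\|\mathbf{u}(T)\|/\|\mathbf{u}(0)\|)$; a standard analysis of the Schrödingerisation discretisation (cf.\ \cite{jin2022quantum,jin2022quantumdetail}) forces $N_p = \mathrm{poly}(1/\varepsilon)$ up to logs, while $R$ and $N_p$ can be scaled together so that $N_p/R = \Theta(1)$, in which case $N_p^2\gamma_0^2 = O(a^2/h^4)$ times a factor polynomial in $1/\varepsilon$. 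Combining these choices absorbs the $n_x^3$ factor and the logs of $R$, $N_p$ into the $\tilde{O}$, leaving the claimed bound $\tilde{O}\bigl(d^2 T^2 \|\mathbf{u}(0)\|^3/(\|\mathbf{u}(T)\|^3 h^4 \varepsilon^3)\bigr)$.

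The main obstacle I anticipate is bookkeeping the three distinct sources of $\varepsilon$: the Trotter error in $V_{\text{heat}}(\tau)$ (which contributes $1/\varepsilon$ via $r$), the post-selection normalisation that turns $\varepsilon$ into $\varepsilon\|\mathbf{u}(T)\|/\|\mathbf{u}(0)\|$ (which contributes an extra factor $\|\mathbf{u}(0)\|/\|\mathbf{u}(T)\|$), and the Fourier-truncation/trapezoidal-rule error in the Schrödingerisation step (which forces $N_p$ to be polynomial in $1/\varepsilon$, yielding the remaining $1/\varepsilon^2$). Accounting for these consistently, and verifying that the $p_k = O(1)$ choice is compatible with both the continuous-to-discrete warped-transform error and the truncation error in $R$, is the delicate part; once done, the final complexity follows by straightforward multiplication and absorbing all $\log$ factors (including $n_x$, $n_p$, $\log R$) into the $\tilde{O}$ notation.
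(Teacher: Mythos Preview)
Your proposal is essentially the paper's own argument: invoke Lemma~\ref{lemma:heat:complexity:T} with $\varepsilon' = O(\varepsilon\,\|\mathbf{u}(T)\|/\|\mathbf{u}(0)\|)$, choose $R = O(\log(1/\varepsilon))$ and $N_p = O(R/\varepsilon) = \tilde{O}(1/\varepsilon)$ so that the $p$-discretisation error $O(\pi R/N_p + e^{-\pi R})$ is $O(\varepsilon)$, and multiply by $O(\|\mathbf{u}(0)\|^2/\|\mathbf{u}(T)\|^2)$ measurement repetitions. The only slip is the claim that one can take $N_p/R = \Theta(1)$: with $R=O(\log(1/\varepsilon))$ and $N_p=\tilde{O}(1/\varepsilon)$ one has $N_p/R = \Theta(1/\varepsilon)$ up to logs, and it is precisely this ratio (via $N_p^2\gamma_0^2 = (N_p/R)^2\, a^2/h^4$) that supplies the extra $1/\varepsilon^2$ you correctly identify in your final paragraph.
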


\begin{proof}
As depicted in Figure \ref{fig:circuit:schro}, the quantum circuit involves a quantum Fourier transform, an inverse quantum Fourier transform, a unitary $V_{\text{heat}}^r(\tau)$ operation, and several measurements with $M_{\geq 0}$. The (inverse) quantum Fourier transform can be implemented with $O(n_p^2)$ CR gates, which corresponds to $O(n_p^2)$ CNOT gates \cite{nielsen2010quantum,vale2023decomposition,lin2022lecture}.

The complexity analysis of simulating the unitary $V_{\text{heat}}^r(\tau)$ as concluded in Lemma \ref{lemma:heat:complexity:T} can be applied with $n_x = O(\log(L/h))$ and $\gamma_{0} = a/(h^2R)$. This results in a computational cost of $O\left( d^2 T^2 N_p^2 \log^3(L/h) /(h^4\varepsilon^\prime R^2) \right)$, where $\varepsilon^\prime$ represents the desired precision $\left| U_{\text{heat}}(T) - V_{\text{heat}}^r(\tau) \right| \leq \varepsilon^\prime$.

The number of measurements and the error of the output state can be discussed in the context of the given expressions $\ket{\hat{\mathbf{v}}(T)} = U(T) \ket{\hat{\mathbf{v}}(0)}$ and $\ket{\hat{\mathbf{v}}_{D}(T)} = V_{\text{heat}}^r(\tau) \ket{\hat{\mathbf{v}}(0)}$. To retrieve $\ket{\mathbf{v}(T)}$ and $\ket{\mathbf{v}_{D}(T)}$, an inverse quantum Fourier transform can be performed on these states. After the projection with $M_{\geq 0}$, one can observe that
\begin{equation}
\begin{aligned}
    &\frac{\left\| M_{\geq 0} \mathbf{v}(T) - \mathbf{u}(T) \otimes \mathbf{p}_{\geq 0} \right\|}{\left\| \mathbf{u}(T) \right\| \left\| \mathbf{p}_{\geq 0} \right\| } = O\left( \frac{\pi R}{N_p} + e^{- \pi R} \right)
\end{aligned}
\end{equation}
which represents the (relative) discretization error over the $p$ variable, $\mathbf{p}_{\geq 0} := \sum_{p_k\geq 0} e^{-p_k} \ket{k}$. On the other hand,
\begin{equation}
\begin{aligned}
    &\left\| \ket{M_{\geq 0} \ket{\mathbf{v}_{D}(T)}} -\ket{M_{\geq 0} \ket{\mathbf{v}(T)}} \right\| \\
    = & O\left( \frac{\left\| M_{\geq 0} \left(\ket{\mathbf{v}_{D}(T)} - \ket{\mathbf{v}(T)}\right) \right\|}{\left\| M_{\geq 0} \ket{\mathbf{v}(T)} \right\|} \right)\\
    = & O\left( \frac{\left\| \mathbf{v}(0) \right\|}{\left\| \mathbf{u}(T)\right\| \left\| \mathbf{p}_{\geq 0} \right\|} \left\| U(T) - V_{\text{heat}}^r(\tau) \right\| \right) \\
    =& O\left( \frac{\left\|\mathbf{u}(0)\right\|}{\left\| \mathbf{u}(T) \right\|} \varepsilon^\prime \right),
\end{aligned}
\end{equation}
where we have used the fact that $\left\|\mathbf{p}_{\geq 0}\right\| = O(\left\|\mathbf{p}\right\|)$. To ensure the precision of the output state, we require that $R=O(\log(1/\varepsilon))$, $N_p = O(R/\varepsilon) = \tilde{O}(1/\varepsilon)$ and $\varepsilon^\prime = O\left(\left\| \mathbf{u}(T) \right\| \varepsilon / \left\| \mathbf{u}(0) \right\|\right)$. Note that the probability of getting the desired state is $O(\left\| \mathbf{u}(T) \right\|^2 / \left\| \mathbf{u}(0) \right\|^2)$, hence $O(\left\| \mathbf{u}(0) \right\|^2 / \left\| \mathbf{u}(T) \right\|^2)$ measurements are needed.

Combining all the steps, the Schr\"odingerization method for solving the heat equation utilizes at most $\tilde{O}\left( d^2 T^2 \frac{\left\| \mathbf{u}(0) \right\|^3}{\left\| \mathbf{u}(T) \right\|^3 h^4 \varepsilon^3} \right)$ single-qubit gates and CNOT gates.

\end{proof}

\begin{myremark}\label{remark:heat:thm:compare}
    In the classical implementation, the application of the difference operator requires $O(s2^{n_x d})$ arithmetic operations, where $s$ denotes the sparsity of the difference operator and $s(D_{D}^{\Delta})=3$ in the case of the heat equation. Using the forward Euler scheme, the classical simulation up to time $T$ within the additive error $\varepsilon$ requires $O(T^2/\varepsilon)$ steps, which results in $O(s2^{n_xd}(T^2/\varepsilon))$ arithmetic operations. In addition, the Courant-Friedrichs-Lewy (CFL) condition requires the time step $\tau = O(h^2)$. As a result, classical simulation up to time $T$ within the additive error $\varepsilon$ requires $O(s2^{n_xd}(T^2/\varepsilon + T/h^2)) = O(s(T^2/h^d\varepsilon + T/h^{d+2}))$  arithmetic operations. Therefore, quantum advantage is achievable in higher-dimensional (for example, $d>5$ if $h=O(\varepsilon)$) cases, where the quantum algorithm demands only a logarithmic number of operations compared to classical methods.
\end{myremark}

\begin{myremark}\label{remark:heat:thm:dp-order}
    The discretization over $p$ is essentially a Fourier spectral method applied to the function $e^{-|p|}$. The observed first-order convergence in $p$ arises from the limited regularity of $e^{-|p|}$, which is continuous but not differentiable ($C^1$). To address this, we can consider a more general initial data   $v(0,x,p) = g(p) u(0,x)$, where $g(p)$ is defined on $\mathbb{R}$ and satisfies:
    \begin{equation}
    g(p) = \left\{
    \begin{aligned}
        &h(p), \quad p \in (-\infty, 0], \\
        &e^{-p}, \quad p \in (0, +\infty).
    \end{aligned}  
    \right.
    \end{equation}
    The convergence rate can be enhanced by selecting a smoother $g(p)$, such as higher-order polynomials \cite{jin2024schrodingerisation}. Assuming $g \in C^k(\mathbb{R})$, the discretization error over the $p$ variable improves to $O(\Delta p^{k+1} + e^{-\pi R})$. Consequently, it suffices to choose $N_p = O(R/\varepsilon^{\frac{1}{k+1}}) = \tilde{O}(1/\varepsilon^{\frac{1}{k+1}})$. Overall, the number of single-qubit gates and CNOT gates required for solving the heat equation can be reduced to $\tilde{O}\left( d^2 T^2 \left\| \mathbf{u}(0) \right\|^3 / \left(\left\| \mathbf{u}(T) \right\|^3 h^4 \varepsilon^{1+\frac{2}{k+1}} \right) \right)$.
\end{myremark}

\subsection{The advection equation}
\label{sec:complexity:adv}

In this section, we conduct a complexity analysis of the quantum circuit presented in Section \ref{sec:advection} for the upwind scheme of the advection equation.

\begin{mylemma}\label{lemma:adv:error:tau}
Consider the Schr\"odinger equation $d|\mathbf{u}(t)\rangle /dt = i\mathbf{H}_{\text{adv}}|\mathbf{u}(t)\rangle $, where the Hamiltonian $\mathbf{H}_{\text{adv}}$ is given by Equation \eqref{eqn:adv:Hd}. The time evolution operator $U_{\text{adv}}(\tau)=\exp(i\mathbf{H}_{\text{adv}}\tau)$ with a time increment $\tau$ can be approximated by the unitary $V_{\text{adv}}(\tau)$ in Equation \eqref{eqn:adv:gate:Vtau}, and its explicit circuit implementation is depicted in Figure \ref{fig:adv:circuit:U11}-\ref{fig:adv:circuit:V}. Furthermore, the approximation error in terms of the operator norm is upper-bounded by
\begin{equation}
\begin{aligned}
    &\left\| U_{\text{adv}}(\tau) - V_{\text{adv}}(\tau) \right\| \\
    \leq & \frac{\tau^2 n_x (N_p\gamma_2^2 + 2N_p\gamma_1\gamma_2 + 2\gamma_1^2)}{4} \sum_{\alpha=1}^{d} a_{\alpha}^2,
\end{aligned}
\end{equation}
and
\begin{equation}
\begin{aligned}
    &\left\| U_{\text{adv}}(T) - V_{\text{adv}}^r(\tau) \right\| \\
    \leq & \frac{T^2 n_x (N_p\gamma_2^2 + 2N_p\gamma_1\gamma_2 + 2\gamma_1^2)}{4r} \sum_{\alpha=1}^{d} a_{\alpha}^2,
\end{aligned}
\end{equation}
where $r = T/\tau$, $d$ denotes the spatial dimension, $N_p=2^{n_p}$ and $N_x=2^{n_x}$ represent the number of grid points for the variables $p$ and $x$, respectively. Additionally, $\gamma_{1}$, $\gamma_{2}$ are defined in Equation \eqref{def:adv:H1gamma}, Equation \eqref{def:adv:H2gamma}.
\end{mylemma}

\begin{proof}    
    Similarly to the proof of Lemma \ref{lemma:heat:error:tau}, the Trotter splitting error is upper-bounded by the norm of commutators. We utilize several commutator results here. For detailed calculations, please refer to Appendix \ref{sec:appendix:equations}. Recalling the definition of $U_{1}(\tau)$, $U_{2}(\tau)$, $V_{1}(\tau)$ and $V_{2}(\tau)$ in Equation \eqref{eqn:adv:unitary:U1U2} and Equation \eqref{eqn:adv:gate:V1V2}, we have 
    \begin{equation}
    \begin{aligned}
        \left\| U_{1}(\tau) - V_{1}(\tau) \right\| &\leq \frac{\gamma_{1}^2 \tau^2 n_x}{2}, \\
        \left\| U_{2}(\tau) - V_{2}(\tau) \right\| &\leq \frac{\gamma_{2}^2 \tau^2 n_x}{2},
    \end{aligned}
    \end{equation}
    where the details are presented in Equation \eqref{eqn:appendix:operator:U1V1} and Equation \eqref{eqn:appendix:operator:U2V2}. This implies that
    \begin{equation}
    \begin{aligned}
        & \left\| \tilde{U}_{1}(\tau) - \tilde{V}_{1}(\tau) \right\| \leq \frac{\gamma_{1}^2 \tau^2 n_x }{2} \sum_{\alpha=1}^{d} a_{\alpha}^2, \\
        & \left\| \tilde{U}_{2}(\tau) - \tilde{V}_{2}(\tau) \right\| \leq \frac{\gamma_{2}^2 \tau^2 n_x }{2} \sum_{\alpha=1}^{d} a_{\alpha}^2.
    \end{aligned}
    \end{equation}

    In all, we obtain
    \begin{equation}
    \begin{aligned}
        &\left\| U_{\text{adv}}(\tau) - V_{\text{adv}}(\tau) \right\| \\
        \leq & \left\| U_{\text{adv}}(\tau) - U_{*}(\tau) \right\| + \left\| U_{*}(\tau) - V_{\text{adv}}(\tau) \right\| \\
        \leq & \frac{\tau^2}{2} \left\| \left[ \mathbf{A}_{1}\otimes D_{\eta}, \mathbf{A}_{2}\otimes I^{\otimes n_p} \right] \right\| \\
        &+  \frac{N_{p}}{2} \left\| \tilde{U}_{1}(\tau) - \tilde{V}_{1}(\tau) \right\| +  \left\| \tilde{U}_{2}(\tau) - \tilde{V}_{2}(\tau) \right\| \\
        \leq & \frac{\tau^2 N_p \gamma_{1}\gamma_{2} n_x}{2} \sum_{\alpha=1}^{d} a_{\alpha}^2 \\
        &+ \frac{N_p \gamma_{1}^2 \tau^2 n_x }{4} \sum_{\alpha=1}^{d} a_{\alpha}^2 + \frac{\gamma_{1}^2 \tau^2 n_x }{2} \sum_{\alpha=1}^{d} a_{\alpha}^2 \\
        \leq & \frac{\tau^2 n_x (N_p\gamma_1^2 + 2N_p\gamma_1\gamma_2 + 2\gamma_2^2)}{4} \sum_{\alpha=1}^{d} a_{\alpha}^2,
    \end{aligned}
    \end{equation}
    where we have put Equation \eqref{eqn:appendix:commutator:H1H2} in. Therefore
    \begin{equation}
    \begin{aligned}
        &\left\| U_{\text{adv}}(T) - V_{\text{adv}}^r(\tau) \right\| \\
        \leq & r \left\| U_{\text{adv}}(\tau) - V_{\text{adv}}(\tau) \right\| \\
        \leq & \frac{T^2 n_x (N_p\gamma_1^2 + 2N_p\gamma_1\gamma_2 + 2\gamma_2^2)}{4r} \sum_{\alpha=1}^{d} a_{\alpha}^2.
    \end{aligned}
    \end{equation}
\end{proof}

\begin{mylemma}\label{lemma:adv:complexity:tau}
    The approximated time evolution operator $V_{\text{adv}}(\tau)$ in Equation \eqref{eqn:adv:gate:Vtau} can be implemented using $O(dN_pn_x)$ single-qubit gates and at most $\mathcal{Q}_{V_{\text{adv}}} = O(d N_p n_x^2)$ CNOT gates for $n_x \geq 3$.
\end{mylemma}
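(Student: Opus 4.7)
The plan is to mirror the bookkeeping carried out in the proof of Lemma \ref{lemma:heat:complexity:tau}, applied this time to the decomposition \eqref{eqn:adv:gate:Vtau}. Writing $V_{\text{adv}}(\tau)$ as the product of the uncontrolled block $\tilde{V}_2(\tau)\tilde{V}_1^{-N_p/2}(\tau)$ and the primed product of $n_p$ controlled blocks $\tilde{V}_1^{2^m}(\tau)$ controlled on the ancilla $k_m$, the total cost reduces to (i) the cost of $V_1(\tau)$ and $V_2(\tau)$ on a single $n_x$-register, (ii) the cost of the controlled version $c\text{-}V_1(\tau)$, and (iii) the multiplicity with which each appears. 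Since $\tilde{V}_1,\tilde{V}_2$ are tensor products of $d$ copies of $V_1,V_2$ on disjoint registers, they cost exactly $d$ times their one-copy counterparts.

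First I would count $V_1(\tau)$ using \eqref{eqn:adv:gate:V1}. The bulk piece $V_1^{(2)}(\tau)=\prod_{j=1}^{n_x} I^{\otimes(n_x-j)}\otimes W_j(\gamma_1\tau,0)$ has the exact same shape as $V_0(\tau)$ in the heat-equation proof, so the estimate $\mathcal{Q}_{V_1^{(2)}}=9n_x^2-33n_x+34$ and $O(n_x)$ single-qubit gates transfers verbatim. The additional periodic-boundary factor $U_1^{(1)}(\tau)=B^{(1)}\cdot\text{CRZ}_{n_x}^{1,\dots,n_x-1}(-2\gamma_1\tau)\cdot(B^{(1)})^\dagger$ costs $O(n_x)$ CNOTs and $O(n_x)$ single-qubit gates: the multi-controlled RZ with $n_x-1$ controls contributes $16n_x-40$ CNOTs by the decomposition of \cite{vale2023decomposition}, and each of $B^{(1)},(B^{(1)})^\dagger$ contributes a further $n_x-1$ CNOTs (for the chain $\prod_m \text{CNOT}_m^{n_x}$), plus $O(n_x)$ single-qubit $X$ and $H$ gates. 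The same count applies to $V_2(\tau)$; the only change is two extra phase gates inside $U_2^{(1)}$, which do not affect the asymptotics. Thus $\mathcal{Q}_{V_1}=\mathcal{Q}_{V_2}=O(n_x^2)$ with $O(n_x)$ single-qubit gates.

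Next I would promote to the controlled version: each constituent gate of $V_1$ (multi-controlled RZ, Hadamard, CNOT, phase, $X$) picks up at most a single extra control qubit, which increases its CNOT cost by a constant factor exactly as in the heat-equation case, giving $\mathcal{Q}_{c\text{-}V_1}=O(n_x^2)$. Summing,
\begin{equation*}
\mathcal{Q}_{V_{\text{adv}}}=d\,\mathcal{Q}_{V_2}+d\cdot\tfrac{N_p}{2}\,\mathcal{Q}_{V_1}+d(N_p-1)\,\mathcal{Q}_{c\text{-}V_1}=O(dN_p n_x^2),
\end{equation*}
and an analogous tally for single-qubit gates gives $O(dN_p n_x)$. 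The one place that requires care rather than raw bookkeeping is the geometric telescoping $\sum_{m=0}^{n_p-1}2^m=N_p-1$: this is what converts what would otherwise be an exponential-in-$n_p$ cost into the linear-in-$N_p$ bound, and it relies crucially on the observation that each controlled block $\tilde{V}_1^{2^m}(\tau)$ is implemented as $2^m$ applications of $c\text{-}\tilde{V}_1(\tau)$ all sharing the single control $k_m$. Beyond this standard trick there is no real obstacle; the periodic-boundary corrections $U_1^{(1)},U_2^{(1)}$ contribute only lower-order $O(n_x)$ overhead that is absorbed by the leading $O(n_x^2)$ bulk cost.
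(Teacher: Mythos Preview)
Your proposal is correct and follows essentially the same approach as the paper: both mirror the heat-equation count in Lemma~\ref{lemma:heat:complexity:tau}, split $V_{\text{adv}}(\tau)$ into one $\tilde{V}_2$, $N_p/2$ uncontrolled $\tilde{V}_1$'s, and $N_p-1$ controlled $\tilde{V}_1$'s, observe that the periodic-boundary factors $U_1^{(1)},U_2^{(1)}$ add only $O(n_x)$ gates to the $O(n_x^2)$ bulk cost of $V_0$, and arrive at the identical total $\mathcal{Q}_{V_{\text{adv}}}=d\mathcal{Q}_{V_2}+d\tfrac{N_p}{2}\mathcal{Q}_{V_1}+d(N_p-1)\mathcal{Q}_{c\text{-}V_1}=O(dN_pn_x^2)$. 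The paper additionally records the exact constants $\mathcal{Q}_{V_1}=\mathcal{Q}_{V_2}=9n_x^2-15n_x-8$ and $\mathcal{Q}_{c\text{-}V_1}=16n_x^2-2n_x-30$, but the structure of the argument is the same as yours.
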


\begin{proof}
The proof is similar to the proof of Lemma \ref{lemma:heat:complexity:tau}. The implementation of $V_{\text{adv}}(\tau)$ involves a maximum of a $\tilde{V}_{2}(\tau)$ gate, $2^{n_p-1}$ $\tilde{V}_{1}(-\tau)$ gates and $\sum_{m=0}^{n_p-1} 2^{m} = 2^{n_p}-1$ controlled $\tilde{V}_{1}(\tau)$ gates. Each $\tilde{V}_{1}$ ($\tilde{V}_{2}$) gate consists of $d$ $V_{1}$ ($V_{2}$) gates. Furthermore, $V_{1}$ and $V_{2}$ are similar to $V_0$ but incorporate additional gates $U_{1}^{(1)}$ and $U_{2}^{(1)}$, each of which can be decomposed into a multi-controlled RZ gate, $2$ Hadamard gates, $2$ phase gates and $2(n_x-1)$ $X$ gates and $2(n_x-1)$ CNOT gates. The controlled $U_{1}^{(1)}$ requires $2(n_x-1)$ extra controlled $X$ gates. Hence the number of single-qubit gates is $d(4n_x + 2 + 2 + 2(n_x - 1)) + d 2^{n_p-1} (2n_x + 2 + 2(n_x-1) + 1) = O(dN_pn_x)$.

The total number of CNOT gates in the circuit implementation of $V_{\text{adv}}$ is determined by
\begin{equation}
    \mathcal{Q}_{V_{\text{adv}}} = d \mathcal{Q}_{V_{2}} + d2^{n_p-1} \mathcal{Q}_{V_{1}} + d(2^{n_p} - 1) \mathcal{Q}_{c-V_{1}},
\end{equation}
with
\begin{equation}
\begin{aligned}
    \mathcal{Q}_{V_{2}} = \mathcal{Q}_{V_{1}} &= \mathcal{Q}_{V_{0}} + (16 n_x - 40) + 2(n_x-1) \\
    &= 9n_x^2 - 15n_x -8.
\end{aligned}
\end{equation}
\begin{equation}
\begin{aligned}
    &\mathcal{Q}_{c-V_{1}} \\
    =& \mathcal{Q}_{c-V_{0}} + 16 (n_x+1) - 40 + 2 n_x + 16(n_x-1) \\
    =& 16n_x^2 - 2n_x - 30.
\end{aligned}
\end{equation}
Then, $V_{\text{adv}}$ can be implemented using $O(dN_pn_x)$ single-qubit gates and at most
\begin{equation}
\begin{aligned}
    \mathcal{Q}_{V_{\text{adv}}} &= d \mathcal{Q}_{V_{2}} + d2^{n_p-1} \mathcal{Q}_{V_{1}} + d(2^{n_p} - 1) \mathcal{Q}_{c-V_{1}} \\
    &= O(d N_p n_x^2),
\end{aligned}
\end{equation}
CNOT gates.

\end{proof}

\begin{mylemma}\label{lemma:adv:complexity:T}
Let $\mathbf{H}_{\text{adv}}$ be the Hamiltonian defined in Equation \eqref{eqn:adv:Hd}. The time evolution operator $U_{\text{adv}}(T) = \exp(i\mathbf{H}_{\text{adv}}T)$ for a time duration $T$ can be implemented on a $(dn_x + n_p)$-qubit system using quantum circuits with $O(d n_x^2 T^2 N_p (N_p\gamma_1^2 + 2N_p\gamma_1\gamma_2 + 2\gamma_2^2)\sum_{\alpha=1}^{d} a_{\alpha}^2 /\varepsilon)$ single-qubit gates and $O(d n_x^3 T^2 N_p (N_p\gamma_1^2 + 2N_p\gamma_1\gamma_2 + 2\gamma_2^2)\sum_{\alpha=1}^{d} a_{\alpha}^2 /\varepsilon)$ non-local gates, within an additive error of $\varepsilon$. The quantum circuit for $U_{\text{adv}}(T)$ involves repetitive applications of the one-time step unitary $V_{\text{adv}}(\tau)$ depicted in Figure \ref{fig:adv:circuit:V}.
\end{mylemma}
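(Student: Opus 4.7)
The strategy is to mimic the proof of Lemma \ref{lemma:heat:complexity:T}: split the interval $[0,T]$ into $r$ sub-intervals of length $\tau = T/r$, bound the accumulated Trotter error by $r$ times the one-step error from Lemma \ref{lemma:adv:error:tau}, solve for the smallest admissible $r$, and then multiply this $r$ by the per-step gate counts from Lemma \ref{lemma:adv:complexity:tau}.

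Concretely, I would first write $U_{\text{adv}}(T) = U_{\text{adv}}(\tau)^{r}$ and $V_{\text{adv}}(\tau)^{r}$ as its approximation, and use the standard unitary-product telescoping inequality
\[
\left\| U_{\text{adv}}(\tau)^{r} - V_{\text{adv}}(\tau)^{r} \right\| \le r \left\| U_{\text{adv}}(\tau) - V_{\text{adv}}(\tau) \right\|,
\]
which follows because $U_{\text{adv}}(\tau)$ is unitary and hence an isometry in operator norm. Substituting the one-step bound from Lemma \ref{lemma:adv:error:tau} gives
\[
\left\| U_{\text{adv}}(T) - V_{\text{adv}}(\tau)^{r} \right\| \le \frac{T^{2} n_{x}^{2}\left( N_{p}\gamma_{1}^{2} + 2N_{p}\gamma_{1}\gamma_{2} + 2\gamma_{2}^{2} \right)}{4 r} \sum_{\alpha=1}^{d} a_{\alpha}^{2}.
\]
Demanding this be at most $\varepsilon$ yields the sufficient choice
\[
r \;=\; \left\lceil \frac{T^{2} n_{x}^{2}\left( N_{p}\gamma_{1}^{2} + 2N_{p}\gamma_{1}\gamma_{2} + 2\gamma_{2}^{2} \right) \sum_{\alpha=1}^{d} a_{\alpha}^{2}}{4\varepsilon} \right\rceil.
\]

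To finish, I would invoke Lemma \ref{lemma:adv:complexity:tau}, which costs $O(dN_{p}n_{x})$ single-qubit gates and $O(dN_{p}n_{x}^{2})$ CNOT gates per application of $V_{\text{adv}}(\tau)$. Multiplying by the above $r$ gives the claimed totals of $O\!\left(d n_{x}^{2} T^{2} N_{p}(N_{p}\gamma_{1}^{2}+2N_{p}\gamma_{1}\gamma_{2}+2\gamma_{2}^{2})\sum_{\alpha}a_{\alpha}^{2}/\varepsilon\right)$ single-qubit gates and $O\!\left(d n_{x}^{3} T^{2} N_{p}(N_{p}\gamma_{1}^{2}+2N_{p}\gamma_{1}\gamma_{2}+2\gamma_{2}^{2})\sum_{\alpha}a_{\alpha}^{2}/\varepsilon\right)$ non-local gates, on the $(dn_{x}+n_{p})$-qubit register used throughout Section \ref{sec:advection}.

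There is no genuine obstacle in this argument: the difficulty was already absorbed into Lemma \ref{lemma:adv:error:tau} (the commutator computations in Appendix \ref{sec:appendix:commutator} controlling the two-stage Lie-Trotter splitting of $\mathbf{H}_{\text{adv}} = \mathbf{A}_{1}\otimes D_{\eta} + \mathbf{A}_{2}\otimes I^{\otimes n_{p}}$ together with the in-dimension and in-shift errors inside $\tilde V_{1}, \tilde V_{2}$) and into Lemma \ref{lemma:adv:complexity:tau} (the explicit gate count for $V_{\text{adv}}(\tau)$). The remaining work is purely bookkeeping — propagating the per-step bounds through the $r$-fold product and verifying that the ceiling in $r$ only affects constants hidden in the $O(\cdot)$ — so the proof will essentially be a one-paragraph repetition of the heat-equation argument with the updated error and gate-count expressions.
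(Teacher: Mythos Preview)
Your proposal is correct and matches the paper's proof essentially line for line: the paper also applies the telescoping bound $\|U_{\text{adv}}^{r}(\tau)-V_{\text{adv}}^{r}(\tau)\|\le r\|U_{\text{adv}}(\tau)-V_{\text{adv}}(\tau)\|$, inserts the one-step error from Lemma~\ref{lemma:adv:error:tau} to obtain the lower bound on $r$, and then multiplies by the per-step gate counts of Lemma~\ref{lemma:adv:complexity:tau}. There is nothing to add or correct.
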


\begin{proof}
To suppress the error in the simulation over the total time $T$ to a small value $\varepsilon$, it is sufficient to divide the total time $T$ into $r$ intervals, where $r = T/\tau$, such that
\begin{equation}
\begin{aligned}
    & \left\| U_{\text{adv}}^r(\tau) - V_{\text{adv}}^r(\tau) \right\| \\
    \leq & \frac{T^2 n_x (N_p\gamma_1^2 + 2N_p\gamma_1\gamma_2 + 2\gamma_2^2)}{4 r} \sum_{\alpha=1}^{d} a_{\alpha}^2 \\
    \leq & \varepsilon,
\end{aligned}
\end{equation}
which can be rearranged as 
\begin{equation}
        r \geq \frac{T^2 n_x (N_p\gamma_1^2 + 2N_p\gamma_1\gamma_2 + 2\gamma_2^2)}{4\varepsilon} \sum_{\alpha=1}^{d} a_{\alpha}^2 .
\end{equation}

According to Lemma \ref{lemma:adv:complexity:tau}, $V_{\text{adv}}(\tau)$ can be implemented using $O(d N_p n_x)$ single-qubit gates and at most $\mathcal{Q}_{V_{\text{adv}}} = O(d N_p n_x^2)$ CNOT gates. Therefore, $V_{\text{adv}}^r(\tau)$ can be implemented using $O(r\mathcal{Q}_{V_{\text{adv}}}) = O(d n_x^3 T^2 N_p (N_p\gamma_1^2 + 2N_p\gamma_1\gamma_2 + 2\gamma_2^2) \sum_{\alpha=1}^{d} a_{\alpha}^2 /\varepsilon)$ non-local gates and $O(d n_x^2 T^2 N_p (N_p\gamma_1^2 + 2N_p\gamma_1\gamma_2 + 2\gamma_2^2)\sum_{\alpha=1}^{d} a_{\alpha}^2 /\varepsilon)$ single-qubit gates.
\end{proof}

\begin{mythm}\label{thm:adv:complexity:solution}
    Given the advection equation \eqref{mod:adv}, the state $\ket{\mathbf{u}(t)}$, where $\mathbf{u}(t)$ is the solution of Equation \eqref{mod:adv:num} with a mesh size $h$, can be prepared up to precision $\varepsilon$ using the Schr\"odingerization method depicted in Figure \ref{fig:adv:circuit:V} and Figure \ref{fig:circuit:schro}. This preparation can be achieved using at most $\tilde{O}\left( d T^2 \sum_{\alpha=1}^{d} a_{\alpha}^2 \left\| \mathbf{u}(0) \right\|^3 / (\left\| \mathbf{u}(T) \right\|^3 h^2 \varepsilon^3) \right)$ single-qubit gates and CNOT gates.
\end{mythm}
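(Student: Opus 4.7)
The plan is to mirror the proof of Theorem~\ref{thm:heat:complexity:solution}, replacing the heat-equation estimates by the advection-equation analogues already established. The Schrödingerisation circuit in Figure~\ref{fig:circuit:schro} consists of: one $\mathcal{QFT}$ and one $\mathcal{IQFT}$ on the $n_p$ auxiliary qubits (each costing $O(n_p^2)$ CNOT gates, hence polylogarithmic in $1/\varepsilon$), the repeated unitary $V_{\text{adv}}^r(\tau)$, and a projective measurement onto some $\ket{p_k}$ with $p_k>0$. The backbone of the argument is therefore to invoke Lemma~\ref{lemma:adv:complexity:T} to count gates for $V_{\text{adv}}^r(\tau)$ after substituting $n_x=O(\log(L/h))$, $\gamma_1 = 1/(2hR)$, $\gamma_2 = 1/(2h)$, and then to calibrate the precision $\varepsilon'$ of $V_{\text{adv}}^r(\tau)$ and the auxiliary parameters $R$ and $N_p$ so that the total error on the recovered state $\ket{\mathbf{u}(T)}$ is at most $\varepsilon$.

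Next I would track the two error sources exactly as in the heat case. The Trotter error contributes $\|U_{\text{adv}}(T)-V_{\text{adv}}^r(\tau)\|\le \varepsilon'$, and the $p$-discretisation plus extension error contributes a relative error of order $\pi R/N_p + e^{-\pi R}$ after projection onto $\ket{p_k}$. Writing
\begin{equation*}
\left\| \frac{e^{p_k}\|\mathbf{v}(0)\|}{\|\mathbf{u}(T)\|}\hat{P}_{k}\ket{\mathbf{v}_D(T)} - \ket{\mathbf{u}(T)}\ket{p_k} \right\|
= O\!\left(\frac{e^{p_k}\|\mathbf{u}(0)\|}{\|\mathbf{u}(T)\|}\varepsilon' + \frac{\pi R}{N_p} + e^{-\pi R}\right),
\end{equation*}
the same balancing as before forces $R=O(\log(1/\varepsilon))$, $N_p=\tilde{O}(1/\varepsilon)$, $p_k=O(1)$, and $\varepsilon' = O(\|\mathbf{u}(T)\|\varepsilon/\|\mathbf{u}(0)\|)$. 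The success probability of the post-selection is $\Theta(\|\mathbf{u}(T)\|^2/\|\mathbf{u}(0)\|^2)$, so $O(\|\mathbf{u}(0)\|^2/\|\mathbf{u}(T)\|^2)$ repetitions are required to amplify the desired branch.

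Finally I would plug the chosen parameters into Lemma~\ref{lemma:adv:complexity:T}. The bottleneck is the factor $N_p(N_p\gamma_1^2 + 2N_p\gamma_1\gamma_2 + 2\gamma_2^2)$. With the chosen $R,N_p,\gamma_1,\gamma_2$ the three summands become, up to logarithms, $1/(h^2\varepsilon^2)$, $1/(h^2\varepsilon^2)$, and $1/(h^2\varepsilon)$ respectively, so the first two dominate and yield $\tilde O(1/(h^2\varepsilon^2))$. Multiplying by the prefactor $d\,n_x^3\,T^2\sum_\alpha a_\alpha^2/\varepsilon'$ and by the $\tilde O(\|\mathbf{u}(0)\|^2/\|\mathbf{u}(T)\|^2)$ measurement overhead, and noting $n_x^3 = \tilde O(1)$, gives the claimed
\begin{equation*}
\tilde{O}\!\left(d\,T^2 \sum_{\alpha=1}^{d} a_\alpha^2 \,\frac{\|\mathbf{u}(0)\|^3}{\|\mathbf{u}(T)\|^3\, h^2\,\varepsilon^3}\right).
\end{equation*}

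The main obstacle I anticipate is bookkeeping rather than conceptual: one has to verify that, once $R=O(\log(1/\varepsilon))$ is fixed, no hidden $R$-factor from $\gamma_1$ or from the $\eta_k$ range alters the scaling in $\varepsilon$, and that the factor $h^{-2}$ (rather than $h^{-4}$ as in the heat case) truly emerges because the upwind scheme is only first-order in $h$ while the central-difference Laplacian contributes $h^{-2}$ an extra time through the coefficient $\gamma_0=a/(h^2R)$. Tracking these exponents carefully — and in particular confirming that the Trotter-splitting cost dominates over the Fourier-related costs — is where the care is required; the remaining steps are direct substitutions into the lemmas already proved.
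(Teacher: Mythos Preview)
Your proposal is correct and follows essentially the same route as the paper's proof: invoke Lemma~\ref{lemma:adv:complexity:T} with $n_x=O(\log(L/h))$, $\gamma_1=1/(2hR)$, $\gamma_2=1/(2h)$, then reuse verbatim the measurement and $p$-discretisation error analysis from Theorem~\ref{thm:heat:complexity:solution} to fix $R=O(\log(1/\varepsilon))$, $N_p=\tilde O(1/\varepsilon)$, $\varepsilon'=O(\|\mathbf{u}(T)\|\varepsilon/\|\mathbf{u}(0)\|)$, and finally multiply by the $O(\|\mathbf{u}(0)\|^2/\|\mathbf{u}(T)\|^2)$ post-selection overhead. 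Your explicit check that the $N_p^2\gamma_1^2$ and $N_p^2\gamma_1\gamma_2$ terms dominate the $N_p\gamma_2^2$ term (all contributing $\tilde O(1/(h^2\varepsilon^2))$ after parameter substitution) is a detail the paper leaves implicit, but the overall argument is the same.
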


\begin{proof}
The complexity analysis of simulating the unitary $V_{\text{adv}}^r(\tau)$ as concluded in Lemma \ref{lemma:adv:complexity:T} can be applied with $n_x = O(\log(L/h))$ and $\gamma_{1}=1/(2hR)$, $\gamma_{2}=1/(2h)$. This results in a computational cost of $O\left( d^2 T^2 N_p^2 \log^3(L/h) \sum_{\alpha=1}^{d} a_{\alpha}^2 /(h^2\varepsilon^\prime R^2) \right)$, with $\varepsilon^\prime = \left\| \mathbf{u}(T) \right\| \varepsilon / \left\| \mathbf{u}(0) \right\|$.

The number of measurements and the error analysis of the output state are exactly the same as in the proof of Theorem \ref{thm:heat:complexity:solution}, leading to $R=O(\log(1/\varepsilon))$, $N_p = O(R/\varepsilon) = \tilde{O}(1/\varepsilon)$. Combining all the steps, the Schr\"odingerization method for the upwind scheme of the advection equation utilizes at most $\tilde{O}\left( d T^2 \sum_{\alpha=1}^{d} a_{\alpha}^2 \frac{\left\| \mathbf{u}(0) \right\|^3}{\left\| \mathbf{u}(T) \right\|^3 h^2 \varepsilon^3} \right)$ single-qubit gates and CNOT gates.

\end{proof}

\begin{myremark}\label{remark:adv:thm:compare}
    As discussed in Remark \ref{remark:heat:thm:compare}, classical simulation up to time $T$ within the additive error $\varepsilon$ requires $O(s2^{n_xd}(T^2/\varepsilon + T/h)) = O(s(T^2/h^d\varepsilon + T/h^{d+1}))$ arithmetic operations since $\tau=O(h)$ for the advection equation. Therefore, quantum advantage is achievable in higher-dimensional (for example, $d>4$ if $h=O(\varepsilon)$) cases. Compared to the quantum circuit constructed in \cite{sato2024hamiltonian}, where the central difference operator is used to obtain a Hermitian matrix, the paramether $N_p^2$ in the gate complexity comes from the largest value of the Fourier variable $\eta$. However, our method is applicable to general non-Hermitian operators.
\end{myremark}

\begin{myremark}\label{remark:adv:thm:dp-order}
    The improvements in the discretization error over the $p$ variable can be achieved as discussed in Remark \ref{remark:heat:thm:dp-order}. By selecting $ g \in C^k(\mathbb{R}) $, the number of single-qubit gates and CNOT gates required for solving the advection equation can be reduced to $ \tilde{O}\left( d T^2 \sum_{\alpha=1}^{d} a_{\alpha}^2 \left\| \mathbf{u}(0) \right\|^3 / \left(\left\| \mathbf{u}(T) \right\|^3 h^2 \varepsilon^{1+\frac{2}{k+1}} \right) \right)$.
\end{myremark}

\section{Numerical experiments}
\label{sec:numerical}

This section presents the numerical experiments conducted to validate the Schr\"odingerisation circuits introduced in Sections \ref{sec:heat} and \ref{sec:advection}. The experiments are performed using the {\it Qiskit} package \cite{Qiskit}, which facilitates the construction and execution of the quantum circuits on a local simulator. Throughout the implementation phase, the \emph{Statevector} simulator is utilized to retrieve the complete solution, while the \emph{Estimator} function is employed for the measurement of observables. To contextualize the performance of the quantum simulations, comparisons are made against two reference points: a classical implementation of the Schr\"odingerisation method executed using the {\it Scipy} package \cite{2020SciPy-NMeth}, and the direct application of the matrix exponential operator $e^{\mathbf{A} t}$ to the initial condition vector $\mathbf{u}(0)$, achieved through conventional matrix-vector multiplication.

\subsection{The heat equation}
We first conduct a Hamiltonian simulation to solve the one-dimensional heat equation subject to Dirichlet boundary conditions, as explicated in Section \ref{sec:heat},
\begin{equation}
    \left\{ \begin{aligned}
        \partial_t u(t,x) &= a \partial_{xx} u(t,x), \\
        u(0,x) &= f(x),
    \end{aligned}\right. \quad x\in \Omega = [0,L].
\end{equation}
Given the coefficient $a$ and the initial condition $f(x)$ as 
\begin{equation}
a = L/\pi^2, \quad f(x) = \sin(\pi x / L),
\end{equation}
the exact solution is
\begin{equation}
    u(t,x) = e^{-t/L} \sin(\pi x/L).
\end{equation}
Employing the central difference approximation  in conjunction with the Schr\"odingerisation procedure in a sequential manner, the discretized heat equation is governed by the following ODEs
\begin{equation}
    \frac{d \mathbf{u}(t)}{dt} = \mathbf{A} \mathbf{u}(t) := a D_{D}^{\Delta} \mathbf{u}(t),
\end{equation}
\begin{equation}
    \frac{d \hat{\mathbf{v}}(t)}{dt} = i \mathbf{H}_{\text{heat}} \hat{\mathbf{v}}(t) := i \left( \mathbf{A}\otimes D_{\eta} \right) \hat{\mathbf{v}}(t),
\end{equation}
where $\mathbf{u}(t) \in \mathbb{R}^{2^{n_x}}$, $\hat{\mathbf{v}}(t) \in \mathbb{R}^{2^{n_x + n_p}}$. To formulate the problem, we set $L = 17$, $n_x=4$ for the central difference approximation, and set $n_p=3,5,7$, $R=4$ for the Schr\"odingerisation method. As shown in Figure \ref{fig:circuit:schro}, $V(\tau)$ is implemented $r=T/\tau$ times with the time increment parameter  $\tau = 0.005$.

The numerical results at time $T=5$ are illustrated in Figure \ref{fig:heat:solution} and Figure \ref{fig:heat:norm}. As shown in Figure \ref{fig:heat:solution}, there is a clear concordance between the numerical solutions derived from the quantum circuits and those obtained via the classical implementation of the Schr\"odingerisation method, regardless of the value of $n_p$. Furthermore, as the number of ancilla qubits $n_p$ increases, the solutions produced by the Schr\"odingerisation method progressively converge towards the solution computed directly using the matrix exponential operator $e^{\mathbf{A} t}$. This convergence underscores the Schr\"odingerisation method's capability to effectively approximate the solution. 
The discrepancy with the exact solution is due to the too small number of mesh points used (in both $x$ and $p$) which can be reduced if more grid points are used, which is not done here due to our limited quantum computing resource.

Figure \ref{fig:heat:norm} illustrates the energy $\left\| \mathbf{u} \right\|^2$ obtained from the circuits using $10000$ shots. This result is achieved using two approaches:
\begin{itemize}
    \item Measure the last qubit in register $p$ to obtain the probability $P(1)$ of it being in state $\ket{1}$. This corresponds to the projection of the state $\ket{\mathbf{v}(T)}$ onto $\ket{\mathbf{u}(T)}\ket{k}$ for $ k \geq \frac{N_p}{2}$, i.e., $p_k \geq 0$. Thus, $P(1)$ provides an estimate of the magnitude of $M_{\geq 0} \ket{\mathbf{v}(T)}$, and $\left\| \mathbf{u}(T) \right\|$ is obtained by:
    \begin{equation}
        \left\| \mathbf{u}(T) \right\|^2 \approx \frac{P(1) \left\| \mathbf{u}(0)\right\|^2 \left\| \mathbf{p} \right\|^2}{\left\| \mathbf{p}_{\geq 0} \right\|^2}.
    \end{equation}

    \item Measure the quantum state $|\mathbf{v}\rangle$ with the observable
    \begin{equation}\label{eqn:norm:observable}
        \text{obs} := I^{\otimes n_x} \otimes \ket{1}\bra{1} \otimes \ket{0}\bra{0}^{\otimes (n_p-1)},
    \end{equation}
    which projects the state $|\mathbf{v}(T)\rangle$ to $\ket{\mathbf{u}(T)} \ket{\frac{N_p}{2}}, p_{N_p/2}=0$. Note that in this case $\ket{\mathbf{u}(T)}$ is recovered with $p=0$ since $\mathbf{A}$ is negative definite. For general ODE systems, Jin, Liu and Ma systematically studied the important issue of recovering the original variables in \cite{jin2024schr}.
\end{itemize}
The numerical results validate that both estimates reliably capture the energy measurements. The code is available at  \href{https://github.com/hjp3268/Schrodingerisation-Circuits}{GitHub} \cite{circuitcode}.

\begin{figure*}[htbp]
    \centering
    \includegraphics[width=\textwidth]{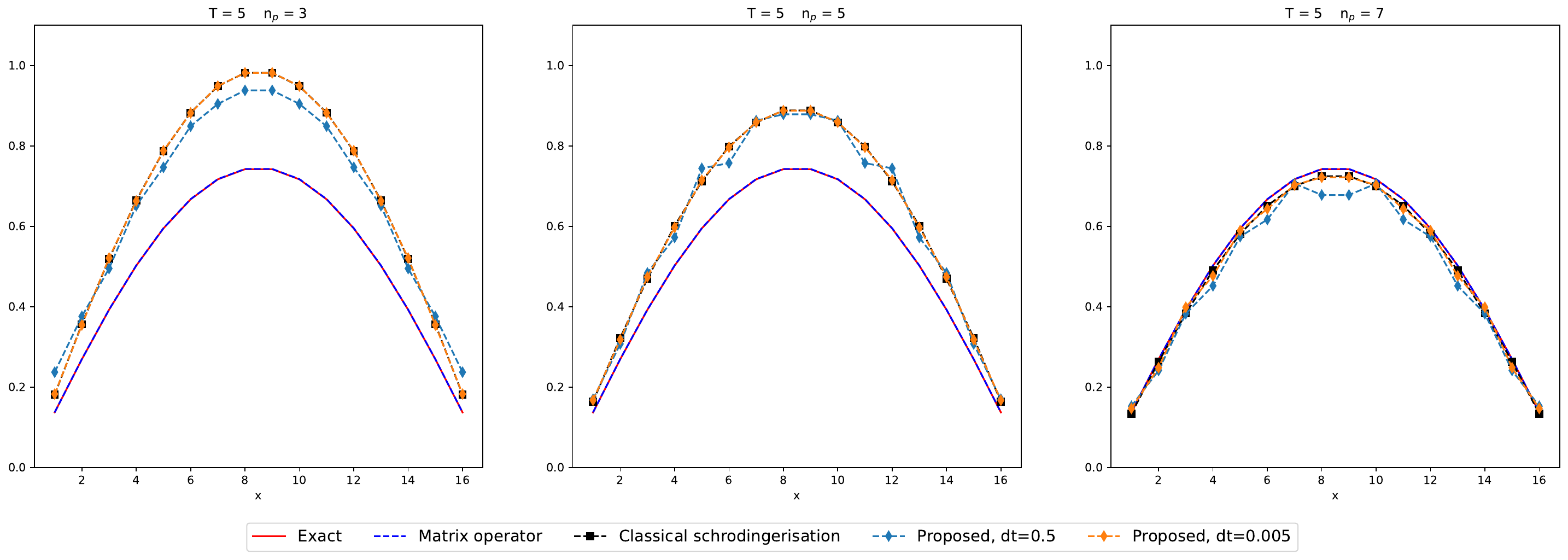}
    \caption{Numerical solutions $u$ of the heat equation using the matrix exponential operator $e^{\mathbf{A}T}$, the classical implementation and the proposed quantum circuits ($n_p=3,5,7$, respectively) of the Schr\"odingerisation method.}
    \label{fig:heat:solution}
\end{figure*}

\begin{figure*}[htbp]
    \centering
    \includegraphics[width=0.6\textwidth]{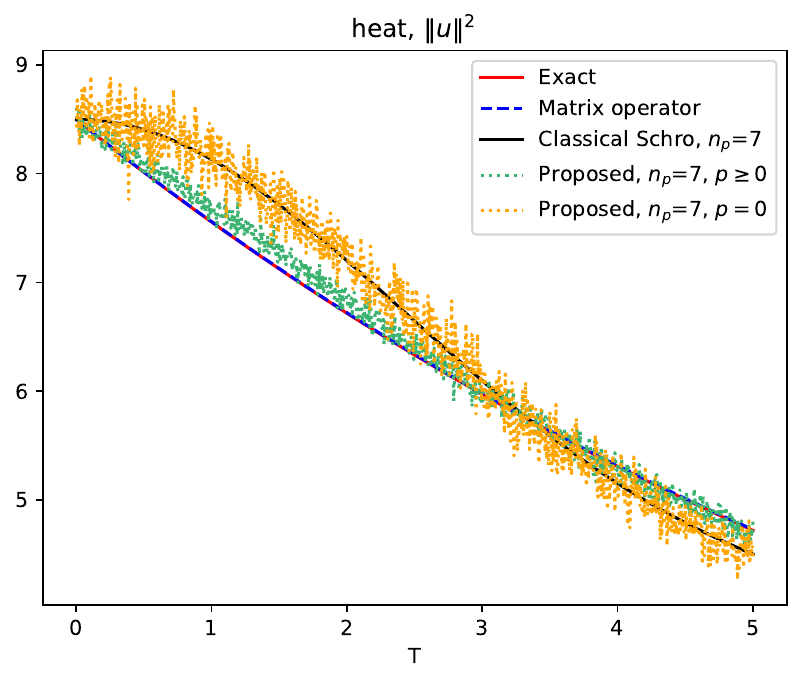}
    \caption{The energy $\left\| u \right\|^2$ of the heat equation using the matrix exponential operator $e^{\mathbf{A}T}$, the classical implementation of the Schr\"odingerisation method and observations of the proposed quantum circuits ($n_p=7$) using $10000$ shots.}
    \label{fig:heat:norm}
\end{figure*}

\subsection{The advection equation}

Next, we consider the one-dimensional advection equation subject to periodic boundary conditions, as explicated in Section \ref{sec:advection},
\begin{equation}
    \left\{ \begin{aligned}
        \partial_t u(t,x) &= a \partial_{x} u(t,x), \\
        u(0,x) &= f(x),
    \end{aligned}\right. \quad x\in \Omega = [0,L].
\end{equation}
Given the coefficient $a$ and the initial condition $f(x)$ as 
\begin{equation}
a = 1, \  f(x) = \left\{ 
\begin{aligned}
    0, \quad x \in \left[kL, kL+L/2\right), \\
    1, \quad x \in \left[kL+L/2, kL\right),
\end{aligned} \right.
\end{equation}
the exact solution is
\begin{equation}
    u(t,x) = f(x+at).
\end{equation}
As discussed above, we use the upwind scheme to avoid numerical oscillations across discontinuities, and then the Schr\"odingerisation procedure is applied:
\begin{equation}
    \frac{d \mathbf{u}(t)}{dt} = \mathbf{A} \mathbf{u}(t) := a D_{P}^{+} \mathbf{u}(t),
\end{equation}
\begin{equation}
\begin{aligned}
    \frac{d \hat{\mathbf{v}}(t)}{dt} &= i \left( \mathbf{A}_{1}\otimes D_{\eta} + \mathbf{A}_{2}\otimes I^{\otimes n_p} \right) \hat{\mathbf{v}}(t) \\
    &\triangleq i \mathbf{H}_{\text{adv}} \hat{\mathbf{v}}(t),
\end{aligned}
\end{equation}
where $\mathbf{u}(t) \in \mathbb{R}^{2^{n_x}}$, $\hat{\mathbf{v}}(t) \in \mathbb{R}^{2^{n_x + n_p}}$. To formulate the problem, we set $L = 16$, $n_x=4$ for the upwind scheme, $n_p=3,5,7$, $R=4$ for the Schr\"odingerisation method and the time increment parameter $\tau = 0.005$ for the quantum circuits.

The numerical results at time $T=3$ are illustrated in Figure \ref{fig:adv:solution} and Figure \ref{fig:adv:norm}. As observed previously, the numerical solutions generated from the quantum circuits agree well with those acquired through the classical Schr\"odingerisation implementation, and demonstrate convergence towards the solution computed directly using $e^{\mathbf{A} t}$ as $n_p$ increases. In contrast to the numerical results obtained via the central difference method described in \cite{sato2024hamiltonian}, the solutions derived through this approach are non-oscillatory due to the dissipative nature of the upwind scheme, rather than oscillatory for the central scheme in \cite{sato2024hamiltonian}, across discontinuities.

Figure \ref{fig:adv:norm} illustrates the energy $\left\| \mathbf{u} \right\|^2$ observed from the circuits using $10000$ shots. This is achieved by measuring the last qubit in register $p$ and measuring the quantum state $|\mathbf{v}\rangle$ with the observable \eqref{eqn:norm:observable}. The numerical results closely approximate those derived from the matrix exponential operator and exhibit dissipative characteristics attributable to the properties of the upwind scheme.

\begin{figure*}[htbp]
    \centering
    \includegraphics[width=\textwidth]{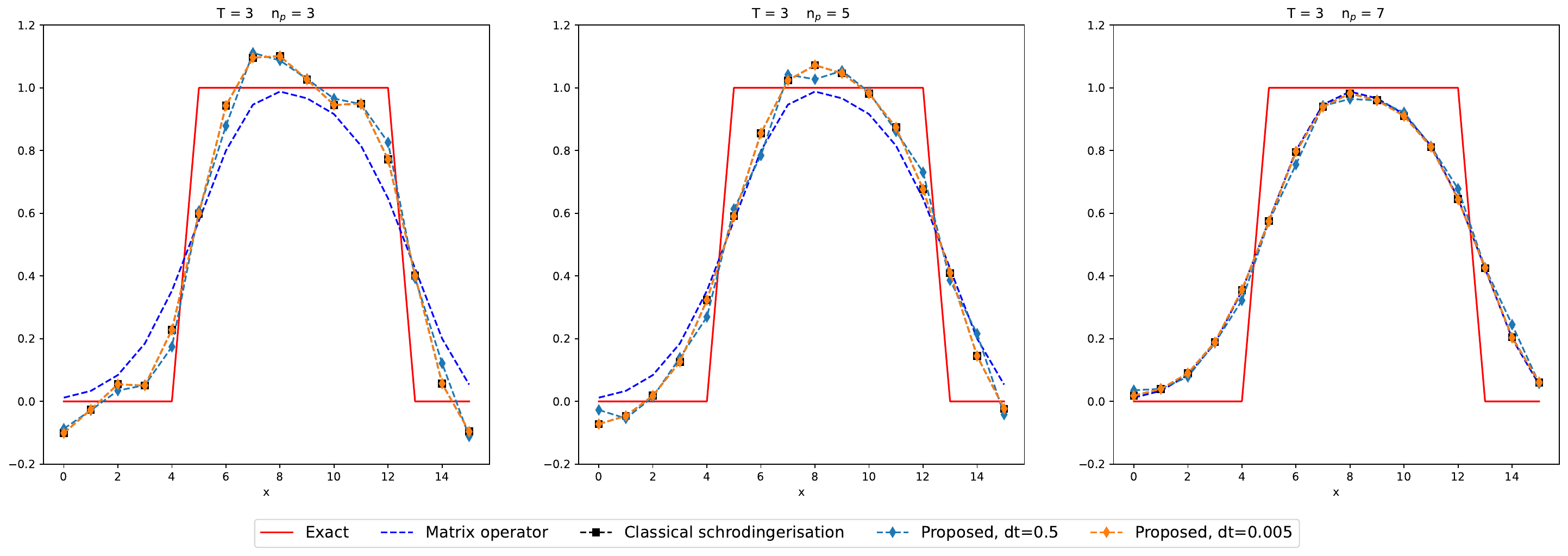}
    \caption{Numerical solutions $u$ of the advection equation using the matrix exponential operator $e^{\mathbf{A}T}$, the classical implementation and the proposed quantum circuits ($n_p=3,5,7$, respectively) of the Schr\"odingerisation method.}
    \label{fig:adv:solution}
\end{figure*}

\begin{figure*}[htbp]
    \centering
    \includegraphics[width=0.6\textwidth]{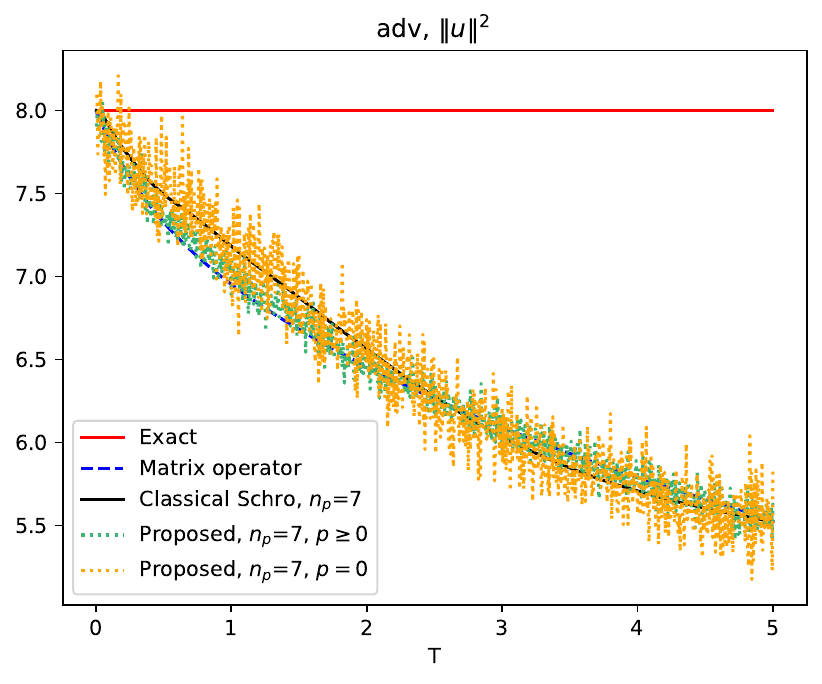}
    \caption{The energy $\left\| u \right\|^2$ of the advection equation using the matrix exponential operator $e^{\mathbf{A}T}$, the classical implementation of the Schr\"odingerisation method and observations of the proposed quantum circuits ($n_p=7$) using $10000$ shots.}
    \label{fig:adv:norm}
\end{figure*}

\section{Conclusion}
\label{sec:conclusion}

We present a practical implementation of quantum circuits for the Schr\"odingerization algorithm, which enables the solution of general partial differential equations (PDEs). To illustrate the approach, we demonstrate its application to two simple examples: the heat equation and the advection equation with the upwind scheme. Detailed quantum circuits are given and their complexities are analyzed, showing quantum advantages in high dimensions over their classical counterparts. The numerical experiments illustrate that the results obtained from these quantum circuits align closely with those achieved through the classical implementation of the Schr\"odingerisation method, as well as with the exact solution of the ordinary differential equation (ODE) discretized from the original partial differential equation (PDE).

In our future research, we aim to expand upon this work by developing scalable quantum circuits to accommodate higher order schemes, diverse boundary and interface conditions, and more challenging PDEs such as Maxwell and radiative transfer equations. Furthermore, we plan to implement these circuits on real quantum computers, thus bridging the gap between theory and practical quantum computing applications.

\section*{Acknowledgement}

SJ, NL, and LZ were supported by NSFC grant No. 12341104, the Shanghai Jiao Tong University 2030 Initiative, and the Fundamental Research Funds for the Central Universities. SJ was also supported by the Shanghai Municipal Science and Technology Major Project (2021SHZDZX0102),  the Innovation Program of Shanghai Municipal Education Commission (No. 2021-01-07-00-02-E00087). NL was also supported by NSFC grant No. 12471411 and the Science and Technology Program of Shanghai, China (21JC1402900). LZ was also supported by NSFC grant No. 12271360, Shanghai Municipal Science and Technology Project (22JC1401600). 

\bibliographystyle{quantum}
\bibliography{ref}

\onecolumn\newpage
\appendix

\section{Several proofs related to the derivations of the circuits.}
\label{sec:appendix:circuit}

\subsection{Proof of Lemma \ref{lemma:evolution:bell}}
\label{sec:appendix:circuit:lemma:bell}
Direct calculation of $e^{i\lambda} s_{j}^{-} + e^{-i\lambda} s_{j}^{+}$ yields:
\begin{equation}\label{eqn:origin:sj}
    e^{i\lambda} s_{j}^{-} + e^{-i\lambda} s_{j}^{+} = I^{\otimes (n_x-j)} \otimes \left( e^{i\lambda} \ket{a_j}\bra{b_j} + e^{-i\lambda} \ket{b_j}\bra{a_j} \right),
\end{equation}
with
\begin{equation}
    \ket{a_j} := |0\rangle |1\rangle^{\otimes(j-1)}, \quad \ket{b_j} := |1\rangle |0\rangle^{\otimes(j-1)}.
\end{equation}
Recalling Lemma \ref{lemma:changebasis}, one can define the unitary matrix $B_j(\lambda)$ such that
\begin{equation}
    B_{j}(\lambda) \ket{0}\ket{1}^{\otimes (j-1)} = \frac{\ket{a} + e^{-i\lambda}\ket{b}}{\sqrt{2}}, \quad 
    B_{j}(\lambda) \ket{1}\ket{1}^{\otimes (j-1)} = \frac{\ket{a} - e^{-i\lambda}\ket{b}}{\sqrt{2}},
\end{equation}
which are called the Bell basis in \cite{sato2024hamiltonian}. And $B_{j}(\lambda)$ is constructed by
\begin{equation}\label{eqn:changebasis:Blambda}
    B_j(\lambda):=\left( \prod_{m=1}^{j-1} \text{CNOT}_{m}^{j} \right) P_{j}(-\lambda) H_{j},
\end{equation}
where $H_{j}$ is the Hadamard gate acting on the $j$-th qubit, $P_{j}(\lambda)$ is the Phase gate acting on the $j$-th qubit as
\begin{equation}
    P_{j}(\lambda) := \begin{bmatrix}
        1 & 0 \\ 0 & e^{i\lambda}
    \end{bmatrix},
\end{equation}
and $\text{CNOT}_{m}^{j}$ is the CNOT gate acting on the $m$-th qubit controlled by the $j$-th qubit. Then Equation \eqref{eqn:origin:sj} can be simplified as
\begin{equation}
    e^{i\lambda} s_{j}^{-} + e^{-i\lambda} s_{j}^{+} = I^{\otimes(n_x-j)} \otimes B_{j}(\lambda) \left( Z \otimes |1\rangle\langle 1|^{\otimes(j-1)} \right) B_{j}(\lambda)^{\dagger}.
\end{equation}
The time evolution operator is formulated as
\begin{equation}\label{eqn:unitary:sj}
\begin{aligned}
    \exp \left( i\gamma\tau (e^{i\lambda} s_{j}^{-} + e^{-i\lambda} s_{j}^{+}) \right) 
    &= I^{\otimes(n_x-j)} \otimes B_{j}(\lambda) \text{CRZ}_{j}^{1,\dots,j-1}(-2\gamma\tau) B_{j}(\lambda)^{\dagger} \\
    &\triangleq I^{\otimes(n_x-j)} \otimes W_{j}(\gamma\tau, \lambda),
\end{aligned}
\end{equation}
where $\text{CRZ}_{j}^{1,\dots,j-1}(\theta)$ is the multi-controled RZ gate as defined in Lemma \ref{lemma:changebasis}.

\subsection{Proof of Lemma \ref{lemma:H1:select}}
\label{sec:appendix:circuit:lemma:H1:select}
We utilize the binary representation of integers $k = (k_{n_p-1}\cdots k_0.) = \sum_{m=0}^{n_p-1} k_{m} 2^m$. This allows us to express $Q_0^{k}(\tau)$ as $\prod_{m=0}^{n_p-1} Q_{0}^{k_{m} 2^m}(\tau)$, resulting in a reduced number of gates:
\begin{equation}\label{eqn:min:implement}
\begin{aligned}
    Q=&\sum_{k=0}^{N_p-1} Q_0^{k}(\tau) \otimes |k\rangle \langle k| \\
    =& \sum_{k_{n_p-1},\dots,k_{0}} \prod_{m=0}^{n_p-1} Q_{0}^{k_{m} 2^m}(\tau) \otimes \left(\ket{k_{n_p-1}}\bra{k_{n_p-1}}\right) \otimes \cdots \otimes (\ket{k_0}\bra{k_{0}}) \\
    =& \sideset{}{'}\prod_{m=0}^{n_p-1} \left( Q_{0}^{k_{m} 2^m}(\tau) \otimes \sum_{k_{m}} |k_m\rangle \langle k_m| \right) \\
    =& \sideset{}{'}\prod_{m=0}^{n_p-1} \left( Q_0^{2^m}(\tau) \otimes |1\rangle \langle 1| + I^{\otimes n_x} \otimes |0\rangle \langle 0| \right),
\end{aligned}
\end{equation}
where the primed product $\sideset{}{'}\prod$ denotes the regular matrices product for the first register (consisting of $n_x$ qubits) and the tensor product for the second register (consisting of $n_p$ qubits).

\subsection{Proof of Equation \eqref{eqn:heat:notation}}
\label{sec:appendix:circuit:heat:notation}
By definition, 
\begin{equation}
    U_{\text{heat}}(\tau) := \exp(i\mathbf{H}_{\text{heat}}\tau) = \exp \left( i \tau \sum_{k=0}^{N_p-1} \left(k-\frac{N_p}{2}\right) \sum_{\alpha=1}^{d} \left(\mathbf{H}_{0}\right)_{\alpha} \otimes |k\rangle \langle k| \right).
\end{equation}
Combining with the following properties,
\begin{equation}
\begin{aligned}
    &\exp\left( \sum_{k} A_k \otimes |k\rangle \langle k| \right) = \sum_{k} \exp(A_k) \otimes |k\rangle \langle k|, \\
    &\exp\left( \sum_{\alpha} (A)_{\alpha} \right) = \prod_{\alpha} \left(\exp(A)\right)_{\alpha},
\end{aligned}
\end{equation}
we obtain
\begin{equation}
\begin{aligned}
    U_{\text{heat}}(\tau) &= \sum_{k=0}^{N_p-1} \exp \left( i \tau \left(k-\frac{N_p}{2}\right) \sum_{\alpha=1}^{d} \left(\mathbf{H}_{0}\right)_{\alpha} \right) \otimes |k\rangle \langle k| \\
    &= \sum_{k=0}^{N_p-1} \left( \exp \left( i \tau \sum_{\alpha=1}^{d} \left(\mathbf{H}_{0}\right)_{\alpha} \right) \right)^{k-N_p/2} \otimes |k\rangle \langle k| \\
    &= \sum_{k=0}^{N_p-1} \left( \prod_{\alpha=1}^{d} \exp \left( i \tau \left(\mathbf{H}_{0}\right)_{\alpha} \right) \right)^{k-N_p/2} \otimes |k\rangle \langle k| \\
    &= \sum_{k=0}^{N_p-1} \left( \prod_{\alpha=1}^{d} \left(U_0(\tau)\right)_{\alpha} \right)^{k-N_p/2} \otimes |k\rangle \langle k| \\
    &= \sum_{k=0}^{N_p-1} \tilde{U}_{0}^{k-N_p/2}(\tau) \otimes |k\rangle \langle k|.
\end{aligned}
\end{equation}

\subsection{Proof of Equation \eqref{eqn:adv:notation}}
\label{sec:appendix:circuit:adv:notation}
By employing the first-order Lie-Trotter-Suzuki decomposition, we can approximate the time evolution operator $U_{\text{adv}}(\tau) := \exp(i \mathbf{H}_{\text{adv}} \tau)$ as follows,
\begin{equation}
\begin{aligned}
    U_{\text{adv}}(\tau) &= \exp \left( i\tau \left( \sum_{k=0}^{N_p-1} \left(k-\frac{N_p}{2}\right) \sum_{\alpha=1}^{d} |a_{\alpha}| \left(\mathbf{H}_{1}\right)_{\alpha} \otimes |k\rangle \langle k| + \sum_{\alpha=1}^{d} a_{\alpha} \left( \mathbf{H}_{2} \right)_{\alpha}  \otimes I^{\otimes n_p} \right) \right) \\
    &\approx \exp \left( i\tau \sum_{\alpha=1}^{d} a_{\alpha} \left( \mathbf{H}_{2} \right)_{\alpha}  \otimes I^{\otimes n_p} \right) \exp \left( i \tau \sum_{k=0}^{N_p-1} \left(k-\frac{N_p}{2}\right) \sum_{\alpha=1}^{d} |a_{\alpha}| \left(\mathbf{H}_{1}\right)_{\alpha} \otimes |k\rangle \langle k| \right)  \\
    &= \left( \prod_{\alpha=1}^{d} \exp \left( i a_\alpha \tau \left(\mathbf{H}_{2}\right)_{\alpha} \right) \otimes I^{\otimes n_p} \right) \sum_{k=0}^{N_p-1} \left( \prod_{\alpha=1}^{d} \exp \left( i |a_\alpha| \tau \left(\mathbf{H}_{1}\right)_{\alpha} \right) \right)^{k-N_p/2} \otimes |k\rangle \langle k| \\
    &= \left( \prod_{\alpha=1}^{d} \left( U_2(a_{\alpha}\tau) \right)_{\alpha} \otimes I^{\otimes n_p} \right) \sum_{k=0}^{N_p-1} \left( \prod_{\alpha=1}^{d} \left(U_1(|a_{\alpha}|\tau)\right)_{\alpha} \right)^{k-N_p/2} \otimes |k\rangle \langle k| \\
    &= \left( \tilde{U}_{2}(\tau) \otimes I^{\otimes n_p} \right) \sum_{k=0}^{N_p-1} \tilde{U}_{1}^{k-N_p/2}(\tau) \otimes |k\rangle \langle k|.
\end{aligned}
\end{equation}

\section{Several calculations related to the complexity analysis}
\label{sec:appendix:equations}

\subsection{Calculation of commutators}
\label{sec:appendix:equations:commutator}
    
In this section, we calculate several commutators and the upper bound of their norms, which are used in the proof of Lemma \ref{lemma:heat:error:tau} and Lemma \ref{lemma:adv:error:tau}.

Note that $\sigma_{01}^2 = \sigma_{10}^2 = 0$, $\sigma_{01}\sigma_{10} = \sigma_{00}$, $\sigma_{10}\sigma_{01} = \sigma_{11}$. Direct calculation gives
\begin{equation}
\begin{aligned}
    \left[ \sigma_{01}^{\otimes n_x}, \sigma_{10}^{\otimes n_x} \right] &= \sigma_{00}^{\otimes n_x} - \sigma_{11}^{\otimes n_x}, \\
    \left[ s_j^{-}, \sigma_{01}^{\otimes n_x} \right] &= 0, \quad \left[ s_j^{-}, \sigma_{10}^{\otimes n_x} \right] = \left(\sigma_{10}^{\otimes (n_x-1)} \otimes Z \right) \delta_{j1}, \\
    \left[ s_j^{+}, \sigma_{10}^{\otimes n_x} \right] &= 0, \quad \left[ s_j^{+}, \sigma_{01}^{\otimes n_x} \right] = \left(-\sigma_{01}^{\otimes (n_x-1)} \otimes Z \right) \delta_{j1}, \\
    \left[ s_j^{-}, s_{j^\prime}^{+} \right] &= I^{\otimes (n-j)} \otimes \left( \sigma_{00}\otimes\sigma_{11}^{\otimes(j-1)} - \sigma_{11}\otimes\sigma_{00}^{\otimes(j-1)} \right) \delta_{jj^\prime}, \\
    \left[ s_j^{-}, s_{j^\prime}^{-} \right] &= 0, \quad j \geq j^\prime >1, \quad \left[ s_{1}^{-}, s_{1}^{-} \right] = 0, \\
    \left[ s_j^{-}, s_{1}^{-} \right] &= - I^{\otimes (n_x-j)} \otimes \sigma_{01} \otimes \sigma_{10}^{\otimes (j-2)} \otimes Z, \quad j>1, \\
    \left[ s_j^{+}, s_{j^\prime}^{+} \right] &= 0, \quad j \geq j^\prime >1, \quad \left[ s_{1}^{+}, s_{1}^{+} \right] = 0, \\
    \left[ s_j^{+}, s_{1}^{+} \right] &= I^{\otimes (n_x-j)} \otimes \sigma_{10} \otimes \sigma_{01}^{\otimes (j-2)} \otimes Z, \quad j>1.
\end{aligned}
\end{equation}
As shown in Lemma \ref{lemma:changebasis} and Remark \ref{remark:changebasis:norm}, it follows that 
\begin{equation}
\begin{aligned}
    &\left\| \left[ \sigma_{01}^{\otimes n_x}, \sigma_{10}^{\otimes n_x} \right] \right\| = \left\| \left[ s_j^{-}, s_{j}^{+} \right] \right\| = 1, \\
    &\left\| \left[ s_1^{-}, \sigma_{10}^{\otimes n_x} \right] + \left[ s_1^{+}, \sigma_{01}^{\otimes n_x} \right] \right\| = \left\| \left( \sigma_{10}^{\otimes(n_x-1)} - \sigma_{01}^{\otimes (n_x-1)} \right) \otimes Z \right\| = 1, \\
    &\left\| \left[ s_1^{-}, \sigma_{10}^{\otimes n_x} \right] - \left[ s_1^{+}, \sigma_{01}^{\otimes n_x} \right] \right\| = \left\| \left( \sigma_{10}^{\otimes(n_x-1)} + \sigma_{01}^{\otimes (n_x-1)} \right) \otimes Z \right\| = 1, \\
    &\left\| \left[ s_j^{-}, s_{1}^{-} \right] + \left[ s_j^{+}, s_{1}^{+} \right] \right\| = \left\| I^{\otimes (n_x-j)} \otimes \left(\sigma_{10} \otimes \sigma_{01}^{\otimes (j-2)} - \sigma_{01} \otimes \sigma_{10}^{\otimes (j-2)} \right) \otimes Z \right\| = 1, \quad j>1, \\
    &\left\| \left[ s_{n_x}^{-}, s_{n_x}^{+} \right] - \left[ \sigma_{01}^{\otimes n_x}, \sigma_{10}^{\otimes n_x} \right] \right\| = \left\| I \otimes \left( \sigma_{11}^{\otimes (n_x-1)} - \sigma_{00}^{\otimes (n_x-1)} \right) \right\| = 1.
\end{aligned}
\end{equation}
Therefore we have the following results

\begin{equation}\label{eqn:appendix:commutator:ssigma+}
    \left\| \left[\sum_{j=1}^{n_x} ( s_{j}^{-} +  s_{j}^{+}), \sigma_{01}^{\otimes n_x} + \sigma_{10}^{\otimes n_x}\right] \right\| = \left\| \left[ s_1^{-}, \sigma_{10}^{\otimes n_x} \right] + \left[ s_1^{+}, \sigma_{01}^{\otimes n_x} \right] \right\| = 1.
\end{equation}

\begin{equation}\label{eqn:appendix:commutator:ssigma-}
    \left\| \left[ \sum_{j=1}^{n_x}( s_{j}^{-} -  s_{j}^{+}), \sigma_{01}^{\otimes n_x} - \sigma_{10}^{\otimes n_x} \right] \right\| = \left\| \left[ s_1^{-}, \sigma_{10}^{\otimes n_x} \right] + \left[ s_1^{+}, \sigma_{01}^{\otimes n_x} \right] \right\| = 1.
\end{equation}

\begin{equation}\label{eqn:appendix:commutator:s1+}
    \sum_{j=1}^{n_x} \sum_{j^\prime=j+1}^{n_x} \left\| \left[(s_{j}^{-}+s_{j}^{+}),  (s_{j^\prime}^{-}+s_{j^\prime}^{+}) \right] \right\| = \sum_{j^\prime=2}^{n_x} \left\| \left[s_{1}^{-}, s_{j^\prime}^{-} \right] + \left[s_{1}^{+}, s_{j^\prime}^{+} \right] \right\| = n_x-1.
\end{equation}

\begin{equation}\label{eqn:appendix:commutator:s1-}
    \sum_{j=1}^{n_x} \sum_{j^\prime=j+1}^{n_x} \left\| \left[(s_{j}^{-}-s_{j}^{+}),  (s_{j^\prime}^{-}-s_{j^\prime}^{+}) \right] \right\| = \sum_{j^\prime=2}^{n_x} \left\| \left[s_{1}^{-}, s_{j^\prime}^{-} \right] + \left[s_{1}^{+}, s_{j^\prime}^{+} \right] \right\| = n_x-1.
\end{equation}

\begin{equation}
\begin{aligned}
    \left\| \left[ \mathbf{H}_1, \mathbf{H}_2 \right] \right\| &= \gamma_{1}\gamma_{2} \left\| -2 \left[ \sum_{j=1}^{n_x} s_j^{-}, \sum_{j=1}^{n_x} s_j^{+} \right] + 2 \left[ \sigma_{01}^{\otimes n_x}, \sigma_{10}^{\otimes n_x} \right] \right\| \\
    &= \gamma_{1}\gamma_{2} \left\| -2 \sum_{j=1}^{n_x} \left[ s_j^{-}, s_{j}^{+} \right] + 2 \left[ \sigma_{01}^{\otimes n_x}, \sigma_{10}^{\otimes n_x} \right] \right\| \\
    &\leq  2 \gamma_{1}\gamma_{2} \left( \sum_{j=1}^{n_x-1} \left\| \left[ s_j^{-}, s_{j}^{+} \right] \right\| + \left\| \left[ s_{n_x}^{-}, s_{n_x}^{+} \right] - \left[ \sigma_{01}^{\otimes n_x}, \sigma_{10}^{\otimes n_x} \right] \right\| \right) \\
    &= 2 \gamma_{1}\gamma_{2}  n_x.
\end{aligned}
\end{equation}

\begin{equation}\label{eqn:appendix:commutator:H1H2}
\begin{aligned}
    & \left\| \left[ \mathbf{A}_{1}\otimes D_{\eta}, \mathbf{A}_{2}\otimes I^{\otimes n_p} \right] \right\| \\
    =&\left\| \left[ \sum_{k=0}^{N_p-1} \left( k - \frac{N_p}{2}\right) \sum_{\alpha=1}^{d} |a_{\alpha}| \left( \mathbf{H}_{1} \right)_{\alpha} \otimes \ket{k}\bra{k}, \sum_{\alpha=1}^{d} a_{\alpha} \left( \mathbf{H}_{2} \right)_{\alpha}  \otimes I^{\otimes n_p} \right] \right\| \\
    =& \left\| \sum_{k=0}^{N_p-1} \left( k - \frac{N_p}{2}\right) \left[ \sum_{\alpha=1}^{d} |a_{\alpha}| \left( \mathbf{H}_{1} \right)_{\alpha}, \sum_{\alpha=1}^{d} a_{\alpha} \left( \mathbf{H}_{2} \right)_{\alpha} \right] \otimes \ket{k}\bra{k} \right\| \\
    =& \left\| \sum_{k=0}^{N_p-1} \left( k - \frac{N_p}{2}\right) \sum_{\alpha=1}^{d} |a_{\alpha}| a_{\alpha} \left[ \left( \mathbf{H}_{1} \right)_{\alpha}, \left( \mathbf{H}_{2} \right)_{\alpha} \right] \otimes \ket{k}\bra{k} \right\| \\
    \leq & \frac{N_p \sum_{\alpha=1}^{d} a_{\alpha}^2}{2} \left\| \left[ \mathbf{H}_1, \mathbf{H}_2 \right] \right\| \\
    \leq & N_p \gamma_{1}\gamma_{2} n_x \sum_{\alpha=1}^{d} a_{\alpha}^2.
\end{aligned}
\end{equation}

\subsection{Approximation of operators}
\label{sec:appendix:equations:operator}

\begin{equation}\label{eqn:appendix:operator:heat}
\begin{aligned}
    &\quad \left\| U_{\text{heat}}(\tau) - V_{\text{heat}}(\tau) \right\| \\
    &= \left\| \sum_{k=0}^{N_p-1} \left[ \left( \prod_{\alpha=1}^{d} \left( U_0(\tau) \right)_{\alpha} \right)^{k-N_p/2} - \left( \prod_{\alpha=1}^{d} \left( V_0(\tau) \right)_{\alpha} \right)^{k-N_p/2} \right] \otimes |k\rangle \langle k|  \right\| \\
    &= \max_{0\leq k \leq N_p-1} \left\| \left( \prod_{\alpha=1}^{d} \left( U_0(\tau) \right)_{\alpha} \right)^{k-N_p/2} - \left( \prod_{\alpha=1}^{d} \left( V_0(\tau) \right)_{\alpha} \right)^{k-N_p/2} \right\| \\
    &\leq \max_{0\leq k \leq N_p-1} \left| k - \frac{N_p}{2} \right| \left\| \prod_{\alpha=1}^{d} \left( U_0(\tau) \right)_{\alpha} - \prod_{\alpha=1}^{d} \left( V_0(\tau) \right)_{\alpha} \right\|   \\
    &\leq \frac{d N_p}{2} \left\| U_0(\tau) - V_0(\tau) \right\| \\
    &\leq \frac{dN_p \gamma_{0}^2 \tau^2 (n_x-1)}{4}.
\end{aligned}
\end{equation}

\begin{equation}\label{eqn:appendix:operator:U1V1}
\begin{aligned}
    \left\| U_{1}(\tau) - V_{1}(\tau) \right\| &\leq \left\| U_{1}(\tau) - U_{1}^{(1)}(\tau) U_{1}^{(2)}(\tau) \right\| + \left\| U_{1}^{(1)}(\tau) U_{1}^{(2)}(\tau) - V_{1}(\tau) \right\| \\
    &\leq \frac{\gamma_{1}^2\tau^2}{2} \left\| \left[ \sum_{j=1}^{n_x} ( s_{j}^{-} +  s_{j}^{+}), \sigma_{01}^{\otimes n_x} + \sigma_{10}^{\otimes n_x} \right] \right\| + \left\| U_{1}^{(2)}(\tau) - V_{1}^{(2)}(\tau) \right\| \\
    &\leq \frac{\gamma_{1}^2 \tau^2}{2} + \frac{\gamma_{1}^2 \tau^2 (n_x-1)}{2} = \frac{\gamma_{1}^2 \tau^2 n_x}{2},
\end{aligned}
\end{equation}

\begin{equation}\label{eqn:appendix:operator:U2V2}
\begin{aligned}
    \left\| U_{2}(\tau) - V_{2}(\tau) \right\| &\leq \left\| U_{2}(\tau) - U_{2}^{(1)}(\tau) U_{2}^{(2)}(\tau) \right\| + \left\| U_{2}^{(1)}(\tau) U_{2}^{(2)}(\tau) - V_{2}(\tau) \right\| \\
    &\leq \frac{\gamma_{2}^2\tau^2}{2}  \left\| \left[ \sum_{j=1}^{n_x}( s_{j}^{-} -  s_{j}^{+}), \sigma_{01}^{\otimes n_x} - \sigma_{10}^{\otimes n_x} \right] \right\| + \left\| U_{2}^{(2)}(\tau) - V_{2}^{(2)}(\tau) \right\| \\
    &\leq \frac{\gamma_{2}^2 \tau^2}{2} + \frac{\gamma_{2}^2 \tau^2 (n_x-1)}{2} = \frac{\gamma_{2}^2 \tau^2 n_x}{2},
\end{aligned}
\end{equation}
where the results from Equation \eqref{eqn:appendix:commutator:ssigma+}, Equation \eqref{eqn:appendix:commutator:ssigma-} and Equation \eqref{eqn:heat:error:U0V0} are applied.

\end{document}